\documentclass[lettersize,onecolumn]{IEEEtran}
\usepackage{amsmath,amsfonts,amsthm,amssymb}
\usepackage{algorithm}
\usepackage{algpseudocode}
\usepackage{array}
\usepackage[caption=false,font=normalsize,labelfont=sf,textfont=sf]{subfig}
\usepackage{textcomp}
\usepackage{url}
\usepackage{verbatim}
\usepackage{graphicx}
\usepackage{cite}
\usepackage{wrapfig}
\usepackage{dsfont}
\usepackage{tikz}
\usetikzlibrary{positioning}
\newtheorem{definition}{Definition}
\newtheorem{proposition}{Proposition}
\newtheorem{theorem}{Theorem}
\newtheorem{lemma}{Lemma}
\newtheorem{corollary}{Corollary}
\newtheorem{remark}{Remark}

\newif\ifCOM
\COMfalse

\begin{document}

\title{Classification of independent sets in signed Johnson graphs and applications to kissing arrangements}

\author{Rustem~Takhanov and Stanislav~Yun
\thanks{R.~Takhanov and S.~Yun are with the Mathematics Department, Nazarbayev University, and Nazarbayev University Research Administration, Astana, Kazakhstan e-mail: rustem.takhanov@nu.edu.kz.}
}



\maketitle

\begin{abstract}
Johnson graph are a family of graphs that play an important role in the theory of constant-weight codes, extremal combinatorics, and combinatorial geometry. We study signed analogues of classical Johnson graphs, denoted by $J_\pm(n,k)$, whose vertices are vectors of the form $\pm e_{i_1}\pm\cdots\pm e_{i_k}$,
where two vertices are adjacent whenever their dot product equals $k-1$. We are particularly interested in maximum independent sets in the case $k=4$. An example of such an independent set in $J_\pm(n,4)$, which we call \emph{classical}, is obtained by lifting an arbitrary optimal $(n,4,4)$-code.
Such independent sets naturally define kissing arrangements in ${\mathbb R}^n$. Their cardinalities provide the best known lower bounds on the kissing number in dimensions $7$, $9$, and $12$, bounds that in dimensions $9$ and $12$ date back to 1971 and remain unsurpassed today.

We develop an algorithm that is practical for computing all maximum independent sets in $J_\pm(n,4)$ up to signed permutations for $n\le 12$, $n\ne 11$. In addition to obtaining complete lists, we provide structural characterizations of all types of maximum independent sets in these dimensions, excluding $n=5$ and $n=11$.
Our most striking results concern the case $n=12$. We identify $1579$ non-isomorphic maximum independent sets in $J_\pm(12,4)$, all corresponding to non-isometric kissing arrangements of size $840$ in ${\mathbb R}^{12}$. Structurally, $1575$ of these independent sets arise from three different constructions, the rest are liftings of one of four $(12,4,4)$-codes. To our knowledge, this is the first dimension in which such a large diversity of potentially optimal kissing arrangements has been observed.

Beyond this finite range, we prove that for $n\equiv 2$ or $4 \pmod 6$, every maximum independent set arises from a Steiner quadruple system. We also obtain a characterization of the so-called \emph{nontrivially self-compatible} codes, namely optimal $(n,4,4)$-codes from which non-classical maximum independent sets can be constructed.
\end{abstract}

\begin{IEEEkeywords}
constant weight binary codes, kissing number, Johnson graphs, independent sets.
\end{IEEEkeywords}
\section{Introduction}
\subsection{The Johnson graph  and codes}

Let us denote $\{1,2,\dots,n\}$ by $[n]$. The set of all $k$-subsets $A\subseteq [n]$ are denoted by $\binom{[n]}{k}$. Given $A\subseteq [n]$, its binary incidence vector is defined as
\[
\mathds{1}_A=(x_1,\dots,x_n)\in \{0,1\}^n,
\qquad
x_i=
\begin{cases}
1,& i\in A,\\
0,& i\notin A.
\end{cases}
\]

\begin{definition}
Let $n,k\in {\mathbb N}$, $n\geq k$. The \emph{Johnson graph} $J(n,k)$ is a simple graph whose vertex set is
\[
V(J(n,k))=\left\{\mathds{1}_A\mid A\in \binom{[n]}{k}\right\}.
\]
The set of its edges is defined by
\[
\{\mathds{1}_A, \mathds{1}_B\}\in E(J(n,k)) \Leftrightarrow |A\cap B|=k-1.
\]
\end{definition}
Johnson graph is a regular, vertex-transitive graph. Its natural generalization is a graph $J(n,k,t)$, with the same set of vertices, but $\{\mathds{1}_A, \mathds{1}_B\}\in E(J(n,k,t)) \Leftrightarrow |A\cap B|=t$. The structure of independent sets in the latter type of graphs is a subject of combinatorial extremal theory~\cite{FRANKL1985160,AHLSWEDE1997125} with applications in combinatorial geometry~\cite{Frankl1981}.

We will be interested in the study of independent sets in $J(n,k)$, motivated by applications in the kissing number problem. In fact, the graph-theoretic problem of finding a largest independent set in $J(n,k)$ is precisely the coding-theoretic problem of finding a largest binary weight-$k$ code with minimum distance $4$. Recall that an \emph{$(n,k,4)$-code} is a set $\mathcal C\subseteq \{0,1\}^n$ such that: (a) every codeword in $\mathcal C$ has $k$ non-zero entries; (b) for all distinct $x,y\in \mathcal C$, the number of entries where $x$ and $y$ differ (Hamming distance), denoted $d(x,y)$, satisfies $d(x,y)\ge 4$.
\begin{proposition}[\cite{bok:MW}~]
A family $\mathcal F$ is an independent set in $J(n,k)$ if and only if $\mathcal F$
is an $(n,k,4)$-code.
\end{proposition}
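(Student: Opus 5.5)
The plan is to reduce both conditions to a statement about intersection sizes of $k$-subsets, using one identity for the Hamming distance. For $A,B\in\binom{[n]}{k}$ the vectors $\mathds{1}_A$ and $\mathds{1}_B$ differ exactly on the symmetric difference $A\triangle B$. Hence
\[
d(\mathds{1}_A,\mathds{1}_B)=|A\triangle B|=|A|+|B|-2|A\cap B|=2(k-|A\cap B|).
\]
In particular the distance between two weight-$k$ words is always even. Distinct sets have distance at least $2$, and distance exactly $2$ occurs if and only if $|A\cap B|=k-1$.

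First, condition (a) in the definition of an $(n,k,4)$-code says exactly that $\mathcal F\subseteq V(J(n,k))$. So in both directions we may regard $\mathcal F$ as a family of incidence vectors of $k$-subsets.

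Next, fix distinct $A\ne B$ with $\mathds{1}_A,\mathds{1}_B\in\mathcal F$. By the identity, $d(\mathds{1}_A,\mathds{1}_B)\ge 4$ holds if and only if $k-|A\cap B|\ge 2$, that is, $|A\cap B|\le k-2$. Since $A\ne B$ forces $|A\cap B|\le k-1$, this is equivalent to $|A\cap B|\ne k-1$. By the definition of $E(J(n,k))$, that says precisely that $\{\mathds{1}_A,\mathds{1}_B\}$ is not an edge. Quantifying over all distinct pairs, condition (b) is equivalent to $\mathcal F$ being independent in $J(n,k)$.

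No step here is a real obstacle. The only point needing care is the parity observation: a distance of $3$ cannot occur, so "$d\ge 4$" really is the same as "$d\ne 2$". The displayed identity makes this immediate.
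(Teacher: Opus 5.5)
Your argument is correct and complete. The identity $d(\mathds{1}_A,\mathds{1}_B)=|A\triangle B|=2(k-|A\cap B|)$ shows that, for distinct $k$-sets, $d\ge 4$ is equivalent to $|A\cap B|\ne k-1$, i.e.\ to non-adjacency in $J(n,k)$, and condition (a) says exactly that $\mathcal F\subseteq V(J(n,k))$. The paper gives no proof of its own here; it simply cites MacWilliams--Sloane, and your argument is the standard justification behind that citation.
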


The symmetric group $S_n$ acts naturally on $[n]$, hence also on the set
of $k$-subsets of $[n]$. Namely, for $\sigma\in S_n$ and
$A=\{a_1,\cdots,a_k\}\in \binom{[n]}{k}$, define
\[
\sigma(\mathds{1}_A)=\mathds{1}_{\sigma(A)}, {\rm where\,\,}\sigma(A)=\{\sigma(a_1),\cdots,\sigma(a_k)\}.
\]
Since permutations preserve cardinalities of intersections, we obtain that every $\sigma\in S_n$ induces an automorphism of $J(n,k)$.

Hence maximal independent sets of $J(n,k)$, equivalently optimal $(n,k,4)$-codes,
are interesting only up to permutation equivalence. Such a classification is known
for $n\le 12$; see Östergård~\cite{10.1109/TIT.2010.2050922}. 

\subsection{The signed Johnson graph $J_{\pm}(n,k)$.}
Let
\[
V\bigl(J_{\pm}(n,k)\bigr)
=
\left\{
x\in \{-1,0,1\}^n \mid \sum_{i=1}^n |x_i|=k
\right\},
\]
that is, all vectors with exactly $k$ nonzero coordinates, each equal to $1, -1$.
Equivalently, one may view a vertex as a $k$-subset $A\subseteq [n]$ together with a choice
of signs on its elements.
For $x,y\in V(J_{\pm}(n,k))$, define adjacency by
\[
\{x, y\}\in E(J_{\pm}(n,k))
\quad\Longleftrightarrow\quad
x^\top y = k-1,
\]
or, equivalently, $x$ and $y$ agree in exactly $k-1$ non-zero coordinates (with the same sign).

\section{Kissing arrangements in ${\mathbb R}^{n}$}

A set of points on the sphere of radius $2$, ${\mathcal K}\subseteq {\mathbb R}^n$, is called a kissing arrangement in $\mathbb R^n$, if the Euclidean distance between any two distinct points is at least $2$. Equivalently, if the points are normalized to lie on the unit sphere, then all pairwise inner products do not exceed $\tfrac{1}{2}$.
The next proposition shows why maximum independent sets in $J_{\pm}(n,k)$ for $k=4$ are of central interest.
\begin{proposition}[Folklore]
Let $I\subseteq V(J_{\pm}(n,4))$ be an independent set, and put
\[
\mathcal K = I \cup \{\pm 2e_1,\dots,\pm 2e_n\}\subset \mathbb R^n.
\]
Then every vector in $\mathcal K$ has norm $2$, and for all distinct $u,v\in \mathcal K$,
\[
u^\top v \le 2.
\]
Thus, ${\mathcal K}$ is a kissing arrangement in $\mathbb R^n$.
\end{proposition}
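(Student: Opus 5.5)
The plan is a direct case analysis on the types of the two vectors, preceded by a norm computation. First, the norms. A vertex $x\in V(J_{\pm}(n,4))$ has exactly four nonzero coordinates, each equal to $\pm1$, so $\|x\|^2=4$. Each vector $\pm 2e_i$ also has squared norm $4$. Hence every element of $\mathcal K$ lies on the sphere of radius $2$.

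For the inner products we split according to the types of $u$ and $v$.

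\emph{Case 1: both of the form $\pm 2e_i$.} If the indices differ, then $u^\top v=0$. If the indices coincide, then distinctness forces $u=-v$, so $u^\top v=-4$. In both subcases $u^\top v\le 2$.

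\emph{Case 2: one vector is $\pm 2e_i$, the other is $x\in I$.} Here $u^\top v=\pm 2x_i\in\{-2,0,2\}$, so again $u^\top v\le 2$.

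\emph{Case 3: both vectors are distinct elements $x,y\in I$.} This is the only case where independence is used, and it is the main point. Write $a$ for the number of coordinates where $x$ and $y$ are both nonzero with the same sign, and $b$ for the number where they are both nonzero with opposite signs. Then $x^\top y=a-b$ with $a+b\le 4$. Consequently $x^\top y$ is an integer in $[-4,4]$.
\begin{itemize}
\item If $x^\top y=4$, then $a=4$ and $b=0$, which means $x=y$; this is excluded since the vectors are distinct.
\item If $x^\top y=3=k-1$, then $\{x,y\}$ is an edge of $J_{\pm}(n,4)$, which contradicts the independence of $I$.
\end{itemize}
Therefore $x^\top y\le 2$.

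Finally, the distance reformulation. For $u,v$ of norm $2$ we have $\|u-v\|^2=8-2u^\top v\ge 4$, so the pairwise distances are at least $2$. This is exactly the kissing condition, equivalently inner products at most $\tfrac12$ after normalization. The only subtle step is the integrality argument in Case 3 that rules out the value $4$; the rest is bookkeeping.
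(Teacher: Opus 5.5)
Your proof is correct and complete. The paper states this proposition as folklore and gives no proof; your case analysis is exactly the routine verification it takes for granted: the norm computation, then the three types of pairs, with independence used only to exclude $x^\top y=3$, distinctness used to exclude $x^\top y=4$, and $\|u-v\|^2=8-2u^\top v\ge 4$ giving the distance condition.
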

Observe that every independent set $C\subseteq V(J(n,4))$ canonically induces an independent set in $J_{\pm}(n,4)$: one replaces each codeword by all sign vectors supported on the same $4$-subset of coordinates. In other words, each vector of $C$ is lifted to all $2^4=16$ possible sign combinations on its support. We call a maximum independent set in $J_{\pm}(n,4)$ of this kind {\em classical}.
Remarkably, for $n=9$ and $n=12$, this lifting construction still produces record kissing configurations in $\mathbb R^n$, of cardinalities $306$ and $840$, respectively.  These examples go back to Leech and Sloane~\cite{Leech_Sloane_1971}. For $n=7$ and $n=8$ we obtain kissing arrangements isometric to root systems of type $E_7$ and $E_8$.

The main goal of this paper is to study maximum independent sets in
$J_{\pm}(n,4)$ for $n\leq 12$. Since every signed permutation is an
automorphism of $J_{\pm}(n,4)$, we classify maximum independent sets up
to signed permutations. Our main results are highlighted in bold in
Table~\ref{tab:n44codes}. In addition to determining the number of
equivalence classes under signed permutations, we also provide
structural characterizations of the corresponding independent sets.
\begin{table}[h]
\centering
\begin{tabular}{c c c c c p{7.5cm}}
\hline
$n$
& $A(n)$
& $B(n)$
& $C(n)$
& $D(n)$ & Comment \\
\hline
$5$  & $1$  &   $1$   & $\bf 336$ &  16 & The case $n\equiv 5 \,({\rm mod\,\,} 6)$ usually gives a lot of sets. See subsection~\ref{case5mod6} for details.\\
$6$  & $3$  &   $1$   & $\bf 7$ & 60 & Current record lower bound for the kissing number is 72. Six of 7 maximum independent sets are non-classical. See subsection~\ref{case6} for details.\\
$7$  & $7$  & $1$  & $\bf 2$ & 126 & The record kissing number lower bound. Of the two non-isomorphic maximum independent sets, the classical one corresponds to the root system of type $E_7$ and the non-classical one corresponds to the kissing arrangement found by~\cite{Conway1995}. See subsection~\ref{steiner-triple-leave} for details.\\
$8$  & $14$ & $1$  &  $\bf 1$  & 240 & Optimal and equivalent to the root system of type $E_8$. See subsection~\ref{allsteiner} for details.\\
$9$  & $18$ & $1$  & $\bf 2$ & 306 & The record kissing number lower bound. The two maximum independent sets lead to two non-isomorphic kissing arrangements: the classical one and another one found by~\cite{cohn2026variationsfivedimensionalspherepackings}. \\
$10$ &  $30$  & $1$  & $\bf 1$ & 500 & Current record kissing number lower bound is 510~\cite{GANZHINOV202512}. Only one classical independent set from $S(3,4,10)$.\\
$11$ &  $35$  & $11$  & $\bf ?$  & 582 & Current record kissing number lower bound is 593~\cite{novikov2025alphaevolvecodingagentscientific}. A huge number of non-isomorphic sets beyond our computational resources.\\
$12$ & $51$ & $17$ &  $\bf 1579$   & 840 & The record kissing number lower bound. The first example of a blow-up in a set of possibly optimal kissing arrangements.\\
$14$ & $91$ & $4$ &  $\bf 4$   & 1484 &  Current record is 1932~\cite{GANZHINOV202512}. Classical independent sets are from $S(3,4,14)$.\\
$16$ & $140$ & $1054163 $ &  $\bf 1054163 $   & 2272 &  Current record is 4320~\cite{Barnes_Wall_1959}. Classical independent sets are from $S(3,4,16)$.\\
\hline
\end{tabular}
\caption{
For each $n$, the table contains:
$A(n)$ --- the maximum size of a binary $(n,4,4)$-code;
$B(n)$ --- the number of non-isomorphic optimal binary $(n,4,4)$-codes;
$C(n)$ --- the number of non-isomorphic maximum independent sets in the signed Johnson graph $J_{\pm}(n,4)$;
$D(n)$ --- size of the corresponding kissing arrangement in ${\mathbb R}^n$.
}
\label{tab:n44codes}
\end{table}

\section{Structure of maximum independent sets in $J_{\pm}(n,k)$}
We denote by $\alpha(G)$ the independence number of a graph $G$.
For each sign vector
$
s=(s_1,\dots,s_n)\in\{1,-1\}^n,
$
define
$$
V_s=\{x\in V(J_{\pm}(n,k))\mid \ x_i\in\{0,s_i\}\text{ for all }i\}.
$$
Let $G_s$ be the induced subgraph of $J_{\pm}(n,k)$ on $V_s$.

Then $G_s\cong J(n,k)$. Indeed, every $x\in V_s$ is determined uniquely by its support
$
\operatorname{supp}(x)=\{i:x_i\neq 0\}\in \binom{[n]}k,
$
and for $x,y\in V_s$ we have
$
x\cdot y = |\operatorname{supp}(x)\cap \operatorname{supp}(y)|.
$
Hence $x$ and $y$ are adjacent in $G_s$ exactly when their supports intersect in $k-1$ points, which is precisely the adjacency relation in $J(n,k)$.

Thus the images of $J(n,k)$ under signed coordinate permutations are exactly the induced subgraphs $G_s$, $s\in\{1,-1\}^n$. There are
$
N=2^n
$
such subgraphs.  The edge-counting argument used in the following theorem traces back to Lemma~1 of~\cite{10.1007/978-94-010-1826-5_11}.

\begin{theorem}[Main fact]\label{main-fact}
We have $\alpha\bigl(J_{\pm}(n,k)\bigr) = 2^k\,\alpha\bigl(J(n,k)\bigr)$, and if $I$ is a maximum independent set in $J_{\pm}(n,k)$ (i.e. an independent set of size
$2^k\,\alpha\bigl(J(n,4)\bigr)$), 
then for every $s\in\{1,-1\}^n$,
$
|I\cap V_s|=\alpha\bigl(J(n,k)\bigr).
$
\end{theorem}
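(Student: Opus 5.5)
The plan is to prove both claims together by double counting over the $N=2^n$ subgraphs $G_s$.

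\emph{Step 1: counting.} Each vertex $x\in V(J_\pm(n,k))$ has exactly $k$ nonzero coordinates, which fix $k$ entries of $s$. The remaining $n-k$ entries of $s$ are free. Hence $x$ lies in exactly $2^{n-k}$ of the sets $V_s$.

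\emph{Step 2: upper bound.} Let $I$ be any independent set in $J_\pm(n,k)$. Each $I\cap V_s$ is independent in $G_s\cong J(n,k)$, so $|I\cap V_s|\le \alpha(J(n,k))$. Summing over all $s$ and using Step 1 gives
\[
2^{n-k}|I|=\sum_{s\in\{1,-1\}^n}|I\cap V_s|\le 2^n\,\alpha\bigl(J(n,k)\bigr),
\]
so $|I|\le 2^k\alpha(J(n,k))$.

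\emph{Step 3: lower bound.} Take a maximum $(n,k,4)$-code $C$ and lift each codeword to all $2^k$ sign patterns on its support. This gives a set $I$ of size $2^k\alpha(J(n,k))$. To see that $I$ is independent, take two lifted vectors $x,y$.
\begin{itemize}
\item If $\operatorname{supp}(x)=\operatorname{supp}(y)$ and $x\ne y$, they differ in sign somewhere on the common support, so $x^\top y\le k-2$.
\item If the supports differ, the code condition gives $|\operatorname{supp}(x)\cap\operatorname{supp}(y)|\le k-2$, so again $x^\top y\le k-2<k-1$.
\end{itemize}
In neither case is $x^\top y=k-1$, so $I$ is independent. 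This gives equality $\alpha(J_\pm(n,k))=2^k\alpha(J(n,k))$.

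\emph{Step 4: equality case.} Now let $I$ be a maximum independent set. Then equality holds in the displayed sum. Since each term is individually at most $\alpha(J(n,k))$, every term must equal it, that is, $|I\cap V_s|=\alpha(J(n,k))$ for all $s$. (The statement writes $\alpha(J(n,4))$ in the size; read it as $\alpha(J(n,k))$.)

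There is no real obstacle here. The only points needing care are the multiplicity $2^{n-k}$ in Step 1 and the check in Step 3 that distinct sign patterns on the same support are non-adjacent.
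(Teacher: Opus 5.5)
Your proof is correct and follows the paper's argument essentially step for step. The paper phrases the double count as counting edges of an auxiliary bipartite graph between $V(J_\pm(n,k))$ and $\{1,-1\}^n$, with the same degree $2^{n-k}$ and the same termwise-equality conclusion; your explicit check that the lifted code is independent fills in a step the paper only asserts.
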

\begin{proof} 
Let us introduce a new bipartite graph $\mathcal B$ with parts
$V\bigl(J_{\pm}(n,k)\bigr)$
and  $\{1,-1\}^n$ where a vertex $x\in V(J_{\pm}(n,k))$ is joined to $s$ if and only if $x\in V_s$.

Since $G_s\cong J(n,k)$, its vertex set has size
$
|V_s|=\binom nk.
$
So every right-side vertex $s$ has degree $\binom nk$.

Fix $x\in V(J_{\pm}(n,k))$. Its support has size $k$. In order that $x\in V_s$, the sign vector $s$ must agree with $x$ on those $k$ coordinates, while on the remaining $n-k$ coordinates $s$ is arbitrary. Therefore $x$ belongs to exactly
$
2^{n-k}
$
of the subgraphs $G_s$.

Let
$
I\subseteq V\bigl(J_{\pm}(n,k)\bigr)
$
be an independent set. From the previous analysis of degrees in $\mathcal B$ it follows that the total number of edges in $\mathcal B$ that has one endpoint in $I$, is $2^{n-k}|I|$. On the other hand, for each $s\in\{1,-1\}^n$, the intersection
$
I_s= I\cap V_s
$
is an independent set in $G_s\cong J(n,k)$. Hence
$$
|I_s|\le \alpha\bigl(J(n,k)\bigr)
\qquad\text{for all }s.
$$
In other words, the total number of edges in $\mathcal B$ that has one of endpoints at $s$ and another from $I$ is no more than $\alpha\bigl(J(n,k)\bigr)$. 
Summing over all $s\in\{1,-1\}^n$, we obtain
$
\sum_{s\in\{1,-1\}^n} |I\cap V_s|
\le
2^n\,\alpha\bigl(J(n,k)\bigr).
$
Therefore, by counting edges in $\mathcal B$,
$$
2^{n-k}|I|\le 2^n\,\alpha\bigl(J(n,k)\bigr),
$$
hence
$
|I|\le 2^k\,\alpha\bigl(J(n,k)\bigr).
$

But for any independent set in $J(n,k)$ one can correspond an independent set in $J_{\pm}(n,k)$ by adding all possible sign combinations to coordinates of its elements. So, an independent set of size $2^k\,\alpha\bigl(J(n,k)\bigr)$ exists in $J_{\pm}(n,k)$ and we conclude that $\alpha\bigl(J_{\pm}(n,k)\bigr) = 2^k\,\alpha\bigl(J(n,k)\bigr)$. This proves the first part of theorem.

The second part basically says that any extremal independent set in $J_{\pm}(n,k)$ must intersect \emph{every} induced copy $G_s\cong J(n,k)$ in a maximum independent set.
Now let us assume that $I$ is a maximum independent set. As above,
$
|I\cap V_s|\le \alpha\bigl(J(n,k)\bigr)
$
for all $s$, and $\sum_{s\in\{1,-1\}^n} |I\cap V_s|$ is the number of edges in $\mathcal B$ with one endpoint from $I$.
Therefore,
$
\sum_{s\in\{1,-1\}^n} |I\cap V_s|
=
2^{n-k}|I|.
$

Using $|I|=2^k\,\alpha(J(n,k))$, we get
$
\sum_{s\in\{1,-1\}^n} |I\cap V_s|
=
2^{n-k}\cdot 2^k\,\alpha\bigl(J(n,k)\bigr)
=
2^n\,\alpha\bigl(J(n,k)\bigr).
$

Since each term satisfies
$
|I\cap V_s|\le \alpha\bigl(J(n,k)\bigr),
$
we must have equality termwise, i.e.
$
|I\cap V_s|=\alpha\bigl(J(n,k)\bigr)$
for all $s$.
\end{proof}
We define the
\emph{sign-forgetting map}
\[
\operatorname{SF}:V(J_\pm(n,k))\to V(J(n,k)),\qquad
x\mapsto \mathds{1}_{\operatorname{supp}(x)}.
\]
Theorem~\ref{main-fact} says that if $I\subset V(J_\pm(n,4))$ is a maximum
independent set, then $\operatorname{SF}(I\cap V_s)$ is an optimal
$(n,4,4)$-code.
\subsection{Compatibility of $(n,k,4)$-codes}
Let $I$ be a maximum independent set in $J_{\pm}(n,k)$ and let us treat it as a set of vectors in ${\mathbb R}^n$.
Theorem~\ref{main-fact} basically claims that the intersection of $I$ with any hyperoctant is an optimal $(n,k,4)$-code (or, a maximum independent set in $J(n,k)$) modulo signs. This restriction is quite rigid, i.e. not any pair of $(n,k,4)$-codes can be occupy neighbouring hyperorthants. Let us study that issue.

For each $m \in \{1,\dots,n\}$, define a map
$f_m : \mathbb{R}^n \to \mathbb{R}^n$
by
\[
f_m(x_1,\dots,x_n) = (x_1,\dots,x_{m-1}, -x_m, x_{m+1},\dots,x_n),
\]
that is, $f_m$ changes the sign of the $m$-th coordinate.
\begin{lemma}\label{compatibility-simple}
Let $C,C'$ be maximum independent sets in $J(n,k)$, let $m\in [n]$. 
Then
$C\cup f_m(C')$
is independent in $J_\pm (n,k)$ if and only if
\[
\{S\mid \mathds{1}_S\in C', m\notin S\}=\{S \mid  \mathds{1}_S\in C, m\notin S\}.
\]
\end{lemma}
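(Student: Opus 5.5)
The plan is to reduce the independence of $C\cup f_m(C')$ to a purely set-theoretic condition on the supports, and then use the maximality of $C$ and $C'$ for one direction and their independence for the other. We view $C$ and $C'$ as sets of $0/1$ vectors, i.e.\ as vertices of $V_s$ with $s=(1,\dots,1)$. Then $C$ is independent in $J_\pm(n,k)$, and so is $f_m(C')$, because $f_m$ is a signed permutation and hence an automorphism. So only the cross pairs $x=\mathds{1}_A\in C$, $y=f_m(\mathds{1}_B)$ with $\mathds{1}_B\in C'$ need checking.

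\textbf{Step 1 (inner product).} Since $x$ has nonnegative entries and $y$ agrees with $\mathds{1}_B$ except for a $-1$ in coordinate $m$ when $m\in B$, we get
\[
x^\top y=|A\cap B|-2\cdot[\,m\in A\cap B\,].
\]
Because $|A\cap B|\le k$, this equals $k-1$ exactly when $|A\cap B|=k-1$ and $m\notin A\cap B$. The case $x=y$ cannot produce an edge, since then $x^\top y=k$. Hence
\[
C\cup f_m(C')\ \text{is independent} \iff \text{there are no } \mathds{1}_A\in C,\ \mathds{1}_B\in C' \text{ with } |A\cap B|=k-1,\ m\notin A\cap B. \qquad (\ast)
\]

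\textbf{Step 2 ($\Leftarrow$).} Assume the two families of sets avoiding $m$ coincide, and suppose a forbidden pair $A,B$ exists. If $m\notin B$, then $\mathds{1}_B\in C$, and $A\neq B$ with $|A\cap B|=k-1$ contradicts the independence of $C$ in $J(n,k)$. Otherwise $m\notin A$, so $\mathds{1}_A\in C'$, which contradicts the independence of $C'$ in the same way.

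\textbf{Step 3 ($\Rightarrow$).} This direction is where maximality enters, and it is the only real point of the argument. Take $\mathds{1}_B\in C'$ with $m\notin B$, and suppose $\mathds{1}_B\notin C$. Since $C$ is a maximum, hence maximal, independent set, $C\cup\{\mathds{1}_B\}$ is not independent. So there is $\mathds{1}_A\in C$ with $|A\cap B|=k-1$. As $m\notin B$, we have $m\notin A\cap B$, which contradicts $(\ast)$. Thus $\{S\mid \mathds{1}_S\in C',\, m\notin S\}\subseteq\{S\mid \mathds{1}_S\in C,\, m\notin S\}$. The reverse inclusion follows by the symmetric argument, using the maximality of $C'$ and the fact that condition $(\ast)$ is symmetric in the roles of $A$ and $B$.
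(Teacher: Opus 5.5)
Your proof is correct and follows essentially the same route as the paper. The $(\Rightarrow)$ direction is the same maximality argument. Your $(\Leftarrow)$ direction, via the characterization $(\ast)$, is a compact repackaging of the paper's case analysis over the common part $H$ and the remaining blocks containing $m$.
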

\begin{proof}
$(\Rightarrow)$ Assume that $C\cup f_m(C')$
is independent. 

Let $\mathds{1}_S\in C$ with $m\notin S$, and suppose that $\mathds{1}_S\notin C'$.
Since $C'$ is a maximum independent set in $J(n,k)$, it is maximal, so there exists
$\mathds{1}_A\in C'$ adjacent to $\mathds{1}_S$, that is,
\[
|S\cap A|=k-1.
\]
As $m\notin S$, we also have $m\notin S\cap A$, and thus,
\[
\langle \mathds{1}_S,f_m(\mathds{1}_A)\rangle = k-1,
\]
contrary to the assumption that $C\cup f_m(C')$ is an independent set.
Hence every $\mathds{1}_S\in C$ with $m\notin S$ belongs to $C'$.
By symmetry, every $\mathds{1}_S\in C'$ with $m\notin S$ belongs to $C$.
This proves $\{S\mid \mathds{1}_S\in C , m\notin S\}=\{S \mid \mathds{1}_S\in C', m\notin S\}$.

\noindent
$(\Leftarrow)$ Assume now that
$\{S \mid \mathds{1}_S\in C, m\notin S\}=\{S\mid \mathds{1}_S\in C', m\notin S\}$.

Let us define $H=\{\mathds{1}_S\in C \mid m\notin S\}=\{\mathds{1}_S\in C' \mid m\notin S\}$,
and
$A = C\setminus H$, $B = C'\setminus H$,
where every block in $A\cup B$ contains $m$.
Let us show that $C\cup f_m(C')$
is an independent set in $J_\pm (n,k)$.

First, inside $C$ all inner products are at most $2$, since $C$ is an independent
set in $J(n,k)$. The same holds inside $f_m(C')$, because $f_m$ preserves inner products.
It remains to check mixed inner products between any vector from $C$ and any vector from $f_m(C')$.

Let $x\in C$ and $y\in C'$.
\begin{itemize}
\item The case of $x,y\in H$ is within $C'$;

\item If $x\in H$ and $y\in B$, then $x,y\in C'$, hence $|x^\top y|\le k-2$ and
$x_m=0$, so $\langle x, f_m(y)\rangle = |x^\top y| \le k-2$.
If $x\in A$ and $y\in H$, the argument is symmetric.

\item If $x\in A$ and $y\in B$, then $x_m=y_m=1$, hence
$\langle x, f_m(y)\rangle = |x^\top y| - 2 \le k-2$.
\end{itemize}
Thus every pair of distinct vectors in $C\cup f_m(C')$ has inner product at most $k-2$,
so this set is an independent set.
\end{proof}
Lemma~\ref{compatibility-simple} has the following corollary which will be instrumental in the computational part of the paper.
\begin{corollary}\label{compatibility}
Let $C,C'$ be maximum independent sets in $J(n,k)$, let $m\in [n]$, and let
$\pi\in S_n$. Set
$v=\pi^{-1}(m)$. 
Then
$C\cup f_m(\pi(C'))$
is independent in $J_\pm (n,k)$ if and only if the restriction
\[
\pi|_{[n]\setminus\{v\}} : [n]\setminus\{v\}\to [n]\setminus\{m\}
\]
is an isomorphism of hypergraphs
\[
\bigl([n]\setminus\{v\},\,\{S\mid \mathds{1}_S\in C', v\notin S\}\bigr)
\quad\text{and}\quad
\bigl([n]\setminus\{m\},\,\{S \mid \mathds{1}_S\in C, m\notin S\}\bigr).
\]
\end{corollary}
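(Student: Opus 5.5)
The corollary should follow directly from Lemma~\ref{compatibility-simple} once it is applied to the permuted code $\pi(C')$ in place of $C'$.

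\emph{Step 1.} Since $\pi\in S_n$ induces an automorphism of $J(n,k)$, the set $\pi(C')$ is again a maximum independent set in $J(n,k)$. So Lemma~\ref{compatibility-simple} applies to the pair $C,\pi(C')$ with the same $m$. It says that $C\cup f_m(\pi(C'))$ is independent in $J_\pm(n,k)$ if and only if
\[
\{T\mid \mathds{1}_T\in \pi(C'),\ m\notin T\}=\{S\mid \mathds{1}_S\in C,\ m\notin S\}.
\]

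\emph{Step 2.} I will rewrite the left-hand side in terms of $C'$. A block of $\pi(C')$ has the form $\pi(S)$ with $\mathds{1}_S\in C'$. Because $\pi$ is a bijection with $\pi(v)=m$, we have $m\notin\pi(S)$ if and only if $v\notin S$. Hence the left-hand side equals
\[
\{\pi(S)\mid \mathds{1}_S\in C',\ v\notin S\}.
\]
Every such $S$ lies in $[n]\setminus\{v\}$, so $\pi(S)=\pi|_{[n]\setminus\{v\}}(S)$.

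\emph{Step 3.} The restriction $\pi|_{[n]\setminus\{v\}}$ is a bijection $[n]\setminus\{v\}\to[n]\setminus\{m\}$. Such a bijection is a hypergraph isomorphism exactly when it maps the edge set of the first hypergraph onto the edge set of the second. By Step~2, the image of the edge set $\{S\mid \mathds{1}_S\in C',v\notin S\}$ is the left-hand side of the condition in Step~1. So the isomorphism condition is precisely the set equality of Step~1.

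Chaining Steps 1--3 gives the equivalence. No step poses a real obstacle; the only point needing care is the bookkeeping $m\notin\pi(S)\iff v\notin S$, which rests on $\pi$ being a bijection and $v=\pi^{-1}(m)$.
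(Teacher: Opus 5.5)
Your proposal is correct and follows the paper's proof essentially step for step. You apply Lemma~\ref{compatibility-simple} to $C$ and $D=\pi(C')$, use $m\in\pi(S)\iff v\in S$ to rewrite the blocks of $\pi(C')$ avoiding $m$ as $\{\pi(S)\mid \mathds{1}_S\in C',\ v\notin S\}$, and then read the resulting set equality as the hypergraph isomorphism condition for the bijection $\pi|_{[n]\setminus\{v\}}$.
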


\begin{proof}
Define
$D=\pi(C')$.
Since $\pi$ is a permutation of coordinates, $D$ is again a maximum independent set
in $J(n,k)$. Moreover, because $v=\pi^{-1}(m)$, for every $\mathds{1}_T\in C'$ the corresponding block $T$ satisfies
$m\in \pi(T)\iff v\in T$. 
Therefore,
\begin{equation}\label{vk-change}
\{S\mid \mathds{1}_S\in D, m\notin S\}
=
\{\pi(T)\mid \mathds{1}_T\in C',\ v\notin T\}.
\end{equation}

By Lemma~\ref{compatibility-simple}, $C\cup f_m(D)$
is independent if and only if
\[
\{S\mid \mathds{1}_S\in C , m\notin S\}=\{S \mid \mathds{1}_S\in D, m\notin S\},
\]
that is, if and only if 
\[
\{S\mid \mathds{1}_S\in C , m\notin S\}=\{\pi(T)\mid \mathds{1}_T\in C',\ v\notin T\}.
\]
But the latter exactly means that $\pi|_{[n]\setminus\{v\}}$ maps the collection of sets
$\{S\mid \mathds{1}_S\in C', v\notin S\}$ to $\{S \mid \mathds{1}_S\in C, m\notin S\}$. Since $\pi|_{[n]\setminus\{v\}} : [n]\setminus\{v\}\to [n]\setminus\{m\}$ is one-to-one, this is an isomorphism of hypergraphs. 
\end{proof}

\begin{definition}
An optimal $(n,4,4)$-code $C$ is called
\emph{egotistic} if for any optimal $(n,4,4)$-code $C'$ and any $m\in [n]$, $C\cup f_m(C')$ being independent in $J_\pm (n,4)$ implies $C=C'$.
\end{definition}

\begin{theorem}\label{classical-steiner} Suppose $n\equiv 2{\rm\,\,or\,\,}4\,({\rm mod\,\,}6)$, i.e. there is a Steiner system with parameters $(3,4,n)$. Then, any maximum independent set in $J_\pm (n,4)$ is classical with a base code being a quadruple Steiner system over $[n]$. 
\end{theorem}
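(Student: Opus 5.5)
The plan is to show that every Steiner quadruple system $S(3,4,n)$ is egotistic, and then use connectivity of the hypercube $\{1,-1\}^n$ to conclude that all orthants carry the same code.

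\textbf{Step 1: optimal codes are Steiner systems.} When $n\equiv 2,4 \pmod 6$, an $S(3,4,n)$ exists. It has $\binom n3/4$ blocks, and this equals $\alpha(J(n,4))$. The upper bound holds because in an $(n,4,4)$-code every $3$-subset lies in at most one block, so $4|C|\le \binom n3$. Equality forces every $3$-subset to lie in exactly one block. Hence every optimal $(n,4,4)$-code is an $S(3,4,n)$.

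\textbf{Step 2: Steiner systems are egotistic.} Let $C,C'$ be two such systems with the same set $H$ of blocks avoiding $m$; this is the condition of Lemma~\ref{compatibility-simple}. The blocks of $C$ through $m$ correspond bijectively to the triples in the derived system on $[n]\setminus\{m\}$. That derived system is determined as follows: a $3$-subset $T\subseteq[n]\setminus\{m\}$ spans a block with $m$ exactly when $T$ is not contained in any block of $H$. This uses the fact that every $3$-subset $T$ of $[n]\setminus\{m\}$ lies in exactly one block of $C$, which either avoids $m$ (and so lies in $H$) or equals $T\cup\{m\}$. The same description applies to $C'$ with the same $H$, so $C$ and $C'$ have the same blocks through $m$, and therefore $C=C'$.

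\textbf{Step 3: propagation over orthants.} Let $I$ be a maximum independent set. By Theorem~\ref{main-fact}, for each $s$ the set $C_s:=\operatorname{SF}(I\cap V_s)$ is an optimal code, hence a Steiner system by Step 1. Take $s$ and $s'=f_m(s)$. Applying the sign change $s$ to everything, the set $(I\cap V_s)\cup (I\cap V_{s'})$ corresponds to $C_s\cup f_m(C_{s'})$, which is independent as a subset of $I$. Step 2 then gives $C_s=C_{s'}$. Because the hypercube is connected under single sign flips, all $C_s$ equal a single Steiner system $C$.

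\textbf{Step 4: $I$ is the classical lift.} Every vertex $x$ of $J_\pm(n,4)$ lies in some $V_s$. So $x\in I$ if and only if $\operatorname{supp}(x)\in C$, because $I\cap V_s$ is exactly the signed copy of $C$ in that orthant. Therefore $I$ is the full lift of $C$, i.e.\ classical with a quadruple system as base.

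The only genuinely nontrivial point is Step 2, namely that the derived triple system is recoverable from the residual blocks. It follows directly from the Steiner property, so I expect no real obstacle. Step 3 requires care only with the sign bookkeeping of the orthant isometry.
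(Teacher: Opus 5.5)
Your proposal is correct and follows essentially the paper's route. Triple counting shows every optimal code is an $S(3,4,n)$. Lemma~\ref{steiner}, via Lemma~\ref{compatibility-simple}, shows that two such systems compatible across a sign flip coincide, because the blocks through $m$ are recovered from the common blocks avoiding $m$. Propagating this across adjacent orthants then gives a single code; you merely make explicit the hypercube-connectivity step and the final reconstruction of $I$ as the full lift, both of which the paper leaves implicit.
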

To prove the theorem we need the following lemma. It implies that all quadruple Steiner systems are, actually, egotistic. 
\begin{lemma}\label{steiner} Suppose that $C$ and $C'$ are both Steiner systems with parameters $(3,4,n)$. Then, $\{\mathds{1}_S\mid S\in C\}\cup f_m(\{\mathds{1}_S\mid S\in C'\})$ is independent in $J_\pm (n,4)$ if and only if $C=C'$.
\end{lemma}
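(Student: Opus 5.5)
By Lemma~\ref{compatibility-simple}, and because Steiner systems $S(3,4,n)$ are optimal $(n,4,4)$-codes (they meet the Johnson bound $\binom n3/4$ and are therefore maximum independent sets in $J(n,4)$), the set $\{\mathds{1}_S\mid S\in C\}\cup f_m(\{\mathds{1}_S\mid S\in C'\})$ is independent if and only if $C$ and $C'$ contain the same blocks avoiding $m$. The direction ``$\Leftarrow$'' is then immediate: if $C=C'$, the two block families trivially agree. For ``$\Rightarrow$'' I will assume that the blocks of $C$ and $C'$ avoiding $m$ coincide and show that the blocks containing $m$ are forced as well.

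The key step uses the Steiner property directly. Take any $3$-subset $T\subseteq[n]$ with $m\in T$, write $T=\{m,a,b\}$, and let $B\in C$ be the unique block containing $T$, say $B=\{m,a,b,c\}$. I want to show $B\in C'$. Consider the triple $\{a,b,c\}$, which avoids $m$. In $C$ its unique covering block is $B$ itself, since $B\supseteq\{a,b,c\}$. In $C'$ it is covered by a unique block $B'$. If $m\notin B'$, then $B'$ is a block of $C'$ avoiding $m$, hence by hypothesis a block of $C$ avoiding $m$. That gives two distinct blocks $B$ and $B'$ of $C$ containing $\{a,b,c\}$, which contradicts the Steiner property. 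Therefore $m\in B'$, and so $B'=\{a,b,c,m\}=B$. Thus every block of $C$ containing $m$ lies in $C'$, and by symmetry every block of $C'$ containing $m$ lies in $C$. Combined with the equality of the blocks avoiding $m$, this gives $C=C'$.

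The main obstacle is the uniqueness step, which needs $\{a,b,c\}$ to be covered exactly once in each system. This is exactly what $\lambda=1$ provides, so the argument is short. The only other point to check is that Steiner systems are maximum independent sets in $J(n,4)$, so that Lemma~\ref{compatibility-simple} applies. This holds because any two blocks of a Steiner system share at most $2$ points, and the number of blocks, $\binom n3/4$, equals the upper bound obtained by counting covered triples.
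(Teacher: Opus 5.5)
Your proof is correct and follows essentially the same route as the paper: both use Lemma~\ref{compatibility-simple} to reduce to $C$ and $C'$ sharing their blocks that avoid $m$, and then use the $\lambda=1$ property on a triple avoiding $m$ to force the blocks through $m$. The paper phrases this as the criterion that $T\cup\{m\}\in C$ iff the triple $T\subseteq[n]\setminus\{m\}$ lies in no common block avoiding $m$, which is the same mechanism as your element-wise argument; you also check explicitly that Steiner systems are maximum independent sets in $J(n,4)$, which the paper only establishes inside the proof of Theorem~\ref{classical-steiner}.
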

\begin{proof} From $C=C'$ and the fact that Steiner systems with parameters $(3,4,n)$ define $(n,4,4)$-codes, the independence of $\{\mathds{1}_S\mid S\in C\}\cup f_m(\{\mathds{1}_S\mid S\in C'\})$ is straightforward. Now let us prove the opposite statement. 

From Lemma~\ref{compatibility-simple} we obtain that the sets
$\{S\in C\mid m\notin S\}$
 and 
$\{S\in C'\mid m\notin S\}$
coincide. Let
$H=
\{S\in C\mid m\notin S\}
=
\{S\in C'\mid m\notin S\}$.
Let us show that the blocks containing $m$ are uniquely determined by $H$.

Let $T\subseteq [n]\setminus\{m\}$ be any $3$-subset. Since $C$ is a
Steiner system $S(3,4,n)$, the triple $T$ is contained in a unique block
of $C$.

There are two possibilities:
\begin{itemize}
    \item The unique block containing $T$ does not contain $m$. Then
    $T$ is contained in some block of $H$.

    \item The unique block containing $T$ contains $m$. Since every block
    has size $4$, this block must be exactly
    $
    T\cup\{m\}.
    $
\end{itemize}

Therefore,
\[
T\cup\{m\}\in C
\quad\Longleftrightarrow\quad
T \text{ is not contained in any block of } H.
\]

Hence the collection of blocks of $C$ containing $m$ is completely
determined by $H$. The same argument applies to $C'$.
Since $C$ and $C'$ have the same blocks avoiding $m$, they also have the
same blocks containing $m$. Consequently,
$C=C'$.
\end{proof}
\begin{proof}[Proof of Theorem~\ref{classical-steiner}] 
Now assume that a Steiner system $S(3,4,n)$ exists. Let $C$ be an optimal $(n,4,4)$-code. No $3$-subset of $[n]$ can belong to two different blocks of
$C$, therefore $|C|\le \frac{\binom{n}{3}}{4}$. A Steiner system has
exactly $\frac{\binom{n}{3}}{4}$
blocks, so the above bound is sharp. Hence every optimal $(n,4,4)$-code must satisfy
$|C|=\frac{\binom{n}{3}}{4}$. Since distinct blocks contribute disjoint collections of triples, and
the total number of covered triples equals the total number of triples
in $[n]$, every triple must belong to exactly one block of $C$.
Thus, $C$ is itself a Steiner system $S(3,4,n)$.

Lemma~\ref{steiner} implies the following conclusion: if $n$ is such that there is a Steiner system with parameters $(3,4,n)$, then for any maximum independent set in $J_\pm (n,4)$, adjacent hyperoctants of ${\mathbb R}^n$ must be occupied by the same code $C$ (corresponding to a Steiner system). So, this implies that this maximum independent set is the lifted code $C$ and is classical. 
\end{proof}
\begin{remark}
The property of being egotistic is not equivalent to being a Steiner
system. Indeed, among the $17$ non-isomorphic optimal
$(12,4,4)$-codes classified by Best~\cite{Best1978}, the codes numbered $14$-$17$ are egotistic, although they are not Steiner systems. Thus egotisticity should be understood as a rigidity
property with respect to signed liftings in $J_\pm(n,4)$, rather than
as a design-theoretic characterization of Steiner quadruple systems.

The examples above suggest an arithmetic taxonomy. For
$n\equiv 0\pmod 6$, egotistic optimal codes do occur. For
$n\equiv 2,4\pmod 6$, every optimal $(n,4,4)$-code is a Steiner
system $S(3,4,n)$, and hence is egotistic by
Lemma~\ref{steiner}. The most interesting case therefore appears to be
$n\equiv 1,3\pmod 6$,
when the leave of an optimal $(n,4,4)$-code is a Steiner triple
system $S(2,3,n)$.

The optimal $(7,4,4)$- and $(9,4,4)$-codes are
non-egotistic (proofs can be found in subsection~\ref{steiner-triple-leave}). Moreover, all optimal $(13,4,4)$-codes checked by the
authors were also non-egotistic. This evidence suggests that the
presence of a Steiner-triple-system leave may systematically produce
non-classical liftings. Although we do not currently have a proof, we
would not be surprised if every optimal $(n,4,4)$-code with
$n\equiv 1,3\pmod 6$ were non-egotistic.
\end{remark}
Let us now show that existence of any two unequal compatible optimal $(n,4,4)$-codes leads to a non-classical independent set.
\begin{lemma}\label{conway}
Let $C,C'$ be two optimal $(n,4,4)$-codes such that $C\ne C'$.
Assume that there exists $m\in[n]$ such that
\[
\{S\mid \mathds{1}_S\in C, m\notin S\}=\{S \mid  \mathds{1}_S\in C', m\notin S\}.
\]
Define
\[
I_+
=
\{(\sigma_1x_1,\ldots,\sigma_nx_n):
x\in C,\ \sigma_i\in\{-1,1\}\text{ for }i\neq m,\ \sigma_m=1\},
\]
and
\[
I_-
=
\{(\sigma_1x_1,\ldots,\sigma_nx_n):
x\in C',\ \sigma_i\in\{-1,1\}\text{ for }i\neq m,\ \sigma_m=-1\}.
\]
Then, $I_+\cup I_-$
is a non-classical maximum independent set in $J_{\pm}(n,4)$.
\end{lemma}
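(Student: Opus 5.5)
We prove three things in turn: $I_+\cup I_-$ is independent, it has size $2^4\alpha(J(n,4))$, and it is not classical.

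\emph{Independence.} Take $x,y\in I_+\cup I_-$. We need $x^\top y\neq 3$, i.e.\ $x$ and $y$ do not agree in three nonzero coordinates with the same sign.

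If both lie in $I_+$, set $s$ equal to the signs of $x$ on its support and of $y$ on its support; where both are nonzero with equal sign these choices coincide. Adjacency would require $x^\top y=3$. If the supports are distinct, they meet in at most $2$ points because $C$ is an $(n,4,4)$-code, so $|x^\top y|\le 2$. If the supports are equal and $x\neq y$, they disagree in at least one sign, so $x^\top y\le 2$. The same argument handles two vectors in $I_-$.

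For a mixed pair $x\in I_+$, $y\in I_-$, the $m$-th coordinates, where nonzero, have opposite signs. Equivalently, we are checking the independence of $C\cup f_m(C')$ twisted by an arbitrary sign pattern $s$ on the coordinates other than $m$. Conjugating by the signed diagonal map given by $s$, which fixes coordinate $m$ and commutes with $f_m$, reduces this to $C\cup f_m(C')$ with all signs positive off $m$. Lemma~\ref{compatibility-simple} gives exactly that independence under the stated hypothesis. The hypothesis of that lemma is that $C,C'$ are maximum independent sets of $J(n,k)$, and optimality of the codes supplies this.

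\emph{Maximality.} Split each code into its blocks containing $m$ and avoiding $m$. A codeword avoiding $m$ lifts in $I_+$ to $2^4$ vectors; that codeword also lies in $C'$, and its lifts in $I_-$ are the same $2^4$ vectors. A codeword containing $m$ contributes $2^3$ vectors to $I_+$, and similarly for $C'$ in $I_-$. Write $h$ for the number of common blocks avoiding $m$. Since $|C|=|C'|=\alpha$, both codes have $\alpha-h$ blocks through $m$, and
\[
|I_+\cup I_-| = 16h + 8(\alpha-h)+8(\alpha-h)=16\alpha .
\]
Since $I_+$ and $I_-$ overlap only in the lifts of the common $m$-free blocks, this is the correct count. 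By Theorem~\ref{main-fact} the set is therefore maximum.

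\emph{Non-classicality.} This is the main point. Suppose $I_+\cup I_-$ were the full lifting of a code $D$. Then the support set $\operatorname{SF}(I_+\cup I_-)$ would equal $D$, which forces $D=C\cup C'$. Because $C\neq C'$ and both have the same size and the same $m$-free part, some block $S\ni m$ lies in $C$ but not in $C'$. Its lift with the $m$-th coordinate equal to $-1$ would then have to belong to $I_+\cup I_-$. It is not in $I_+$, whose vectors have $m$-th coordinate $0$ or $+1$, and it is not in $I_-$, since $S\notin C'$. This is a contradiction. Alternatively, Theorem~\ref{main-fact} forces a classical set to meet each hyperoctant in the same code, whereas the positive orthant here meets $C$ and the orthant $f_m$ of it meets $C'$.
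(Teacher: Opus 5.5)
Your proof is correct, but it differs from the paper's at two points.

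For mixed pairs $x\in I_+$, $y\in I_-$, the paper does not invoke Lemma~\ref{compatibility-simple}. It argues directly:
\begin{enumerate}
\item $x^\top y=3$ forces $|S\cap T|=3$ with equal signs on $S\cap T$;
\item the sign clash at $m$ rules out $m\in S\cap T$;
\item in each of the three remaining cases $S$ and $T$ lie in the same code ($C$ or $C'$), which is impossible.
\end{enumerate}
You instead reuse the $(\Leftarrow)$ direction of Lemma~\ref{compatibility-simple} after conjugating by a diagonal sign map. This is economical, but your ``Equivalently'' hides the step that justifies the reduction. An arbitrary mixed pair need not lie in two octants $V_t$, $V_{t'}$ whose sign vectors differ only in coordinate $m$.

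You should say explicitly that if $x^\top y=3$, then $x$ and $y$ agree in sign on their three common coordinates, none of which is $m$. Hence $t$ can be chosen with $x\in V_t$ and $y\in V_{t'}$, where $t'$ is $t$ with the $m$-th sign flipped. Then $(I_+\cap V_t)\cup(I_-\cap V_{t'})$ is the image of $C\cup f_m(C')$ under a diagonal sign map. Mixed pairs that disagree somewhere on their common support have $x^\top y\le 2$ and need no check. The paper's direct support argument avoids this bookkeeping.

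Your count is the same as the paper's.

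Finally, the paper's proof stops after maximality and never argues non-classicality, so your last paragraph supplies a step the paper omits. Both of your arguments for it are valid. The second one, however, follows from the definition of a classical set (the full lifting of a code $D$ meets every octant in a signed copy of $D$), not from Theorem~\ref{main-fact}.
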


\begin{proof}
Since $I_+$ and $I_-$ are independent as subsets of classical maximum independent sets,
it remains to check that there are no edges between $I_+$ and $I_-$.
Take $x\in I_+$, $y\in I_-$,
and let $S=\operatorname{supp}(x)\in C$, $T=\operatorname{supp}(y)\in C'$.
Suppose, for contradiction, that $x$ and $y$ are adjacent in
$J_{\pm}(n,4)$. Then $x^\top y=3$.
Since both $x$ and $y$ have exactly four nonzero coordinates, this implies
that $|S\cap T|=3$
and that the signs of $x$ and $y$ coincide on $S\cap T$.

If $m\in S\cap T$, then $x_m=1$ and $y_m=-1$, so the signs do not
coincide on $m$. Therefore $m\notin S\cap T$.
If $m\notin S$ and $m\notin T$, then by the assumption on the restrictions
outside $m$, both $S$ and $T$ belong to $C$. Hence
$|S\cap T|=3$ contradicts the fact that $C$ is an $(n,4,4)$-code.
If $m\in S$ and $m\notin T$, then $T\in C$ by the same assumption.
Thus $S,T\in C$, and again $|S\cap T|=3$ is impossible. If $m\notin S$ and $m\in T$, then $S\in C'$. Thus $S,T\in C'$, and
$|S\cap T|=3$ contradicts the fact that $C'$ is an $(n,4,4)$-code.
Therefore no edge joins $I_+$ and $I_-$. Hence $I_+\cup I_-$ is independent.

Since $C$ and $C'$ are both optimal, we  have $|C|=|C'|$.
Let us denote
\[
C_0=\{S\mid \mathds{1}_S\in C, m\notin S\}=\{S \mid  \mathds{1}_S\in C', m\notin S\},
\]
and
\[
C_1=\{S\mid \mathds{1}_S\in C, m\in S\},
\qquad
C'_1=\{S \mid  \mathds{1}_S\in C', m\in S\}.
\]
Then $|C_1|=|C'_1|$, due to
\[
|C_0|+|C_1|=|C|=|C'|=|C_0|+|C'_1|.
\]

For every support not containing $m$, both $I_+$ and $I_-$ contain all
$2^4$ sign choices, so these vertices are counted only once in the union.
For every support containing $m$, $I_+$ contains the $2^3$ sign choices
with $x_m=1$, while $I_-$ contains the $2^3$ sign choices with
$x_m=-1$. Therefore,
\[
|I_+\cup I_-|
=
2^4|C_0|+2^3|C_1|+2^3|C'_1|.
\]
Since $|C_1|=|C'_1|$, we get
\[
|I_+\cup I_-|
=
2^4|C_0|+2^4|C_1|
=
2^4|C|,
\]
which is the independence number of $J_{\pm}(n,4)$.
Thus $I_+\cup I_-$ is a maximum independent set in $J_{\pm}(n,4)$.
\end{proof}

The following property is slightly strongly than non-egotism.
\begin{definition}
An optimal $(n,4,4)$-code $C$ is called
\emph{nontrivially self-compatible (NSC)} if there exists a maximum
independent set $I\subset V(J_\pm(n,4))$
such that:
\begin{itemize}
\item for every $s\in\{1,-1\}^n$,
after we apply sign-forgetting to $I\cap V_s$, $\operatorname{SF}(I\cap V_s)=\pi_s(C)$
for some permutation
$\pi_s\in S_n$,
possibly depending on $s$, and $\pi_{(1,...,1)} = {\rm id}$;
\item there exists at least one $\tau\in\{1,-1\}^n$ such that
$\pi_\tau(C)
\neq C$.
\end{itemize}
\end{definition}
Let ${\rm Aut}(C)$ denote a set of permutations of $[n]$ such that $\pi(C)
= C$.
\begin{theorem} An optimal $(n,4,4)$-code $C$ is NSC if and only if there is $m\in [n]$ and 
$\pi\in S_n\setminus {\rm Aut}(C)$ such that
\[
\pi|_{[n]\setminus\{v\}} : [n]\setminus\{v\}\to [n]\setminus\{m\},
\]
where $v=\pi^{-1}(m)$,
is an isomorphism of hypergraphs
\[
\bigl([n]\setminus\{v\},\,\{S\mid \mathds{1}_S\in C, v\notin S\}\bigr)
\quad\text{and}\quad
\bigl([n]\setminus\{m\},\,\{S \mid \mathds{1}_S\in C, m\notin S\}\bigr).
\]
\end{theorem}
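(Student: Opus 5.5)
The plan is to reduce both directions to Corollary~\ref{compatibility} and Lemma~\ref{conway}, with $C'=C$ and the hyperoctant $s_0=(1,\dots,1)$ as the reference.

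\emph{($\Leftarrow$)} Suppose $m$ and $\pi\notin{\rm Aut}(C)$ are given with the stated hypergraph isomorphism. By Corollary~\ref{compatibility}, applied with $C'=C$, the set $C\cup f_m(\pi(C))$ is independent in $J_\pm(n,4)$. Equivalently, by Lemma~\ref{compatibility-simple}, $C$ and $D=\pi(C)$ have the same blocks avoiding $m$. Moreover $D\neq C$, since $\pi\notin{\rm Aut}(C)$. Lemma~\ref{conway} then produces the maximum independent set $I=I_+\cup I_-$, built from $C$ on the half-space $x_m\ge 0$ and from $D$ on $x_m\le 0$. We check the two NSC conditions directly:
\begin{itemize}
\item For $s$ with $s_m=1$, the set $\operatorname{SF}(I\cap V_s)$ contains $C$. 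By Theorem~\ref{main-fact} it has size $\alpha(J(n,4))=|C|$, so it equals $C$, and we take $\pi_s={\rm id}$. In particular $\pi_{(1,\dots,1)}={\rm id}$.
\item For $s_m=-1$, the same argument gives $\operatorname{SF}(I\cap V_s)=D=\pi(C)$, so we take $\pi_s=\pi$, and then $\pi_s(C)\ne C$.
\end{itemize}
The containment claims need one check: a block of $C$ avoiding $m$ lies in $C_0\subseteq D$, so its sign lifts appear in $I_-$ as well.

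\emph{($\Rightarrow$)} Let $I$ witness the NSC property. Each hyperoctant carries some $\pi_s(C)$, with $\pi_{(1,\dots,1)}={\rm id}$ and $\pi_\tau(C)\ne C$ for some $\tau$. Connect $(1,\dots,1)$ to $\tau$ in the hypercube by a path of sign flips. Along this path the codes start at $C$ and end at $\pi_\tau(C)\ne C$, so some edge of the path joins neighbouring octants $s$ and $s'=f_m(s)$ carrying distinct codes $\pi_s(C)\neq\pi_{s'}(C)$.

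The key step is to transport this pair back to the positive octant. Let $g$ be the signed diagonal map sending $s$ to $(1,\dots,1)$; it is an automorphism of $J_\pm(n,4)$ that commutes with $f_m$. Then $g(I)$ contains $\pi_s(C)\cup f_m(\pi_{s'}(C))$, which is therefore independent. We must confirm that $g$ maps $I\cap V_s$ onto the unsigned code $\pi_s(C)$ in $V_{(1,\dots,1)}$, and likewise for $s'$; this holds because sign-forgetting commutes with $g$ on supports.

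Next, conjugate by $\pi_s^{-1}$, a coordinate permutation preserving independence and sending $f_m$ to $f_{m'}$ with $m'=\pi_s^{-1}(m)$. This gives that $C\cup f_{m'}(\rho(C))$ is independent, where $\rho=\pi_s^{-1}\pi_{s'}$ and $\rho(C)\ne C$, so $\rho\notin{\rm Aut}(C)$. Corollary~\ref{compatibility} now yields exactly the required isomorphism with $m'$ and $\rho$ in place of $m$ and $\pi$.

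I expect the main obstacle to be this bookkeeping in ($\Rightarrow$): verifying that signed reflections and permutations commute appropriately with $\operatorname{SF}$ and the maps $f_m$, so that compatibility of two arbitrary adjacent octants really reduces to the positive octant with $C$ as the base code.
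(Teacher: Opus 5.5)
Your proposal is correct and follows the same route as the paper. For ($\Leftarrow$) you apply Corollary~\ref{compatibility} with $C'=C$ and then the gluing of Lemma~\ref{conway}; for ($\Rightarrow$) you reduce to a pair of adjacent hyperoctants and apply Corollary~\ref{compatibility}. Your hypercube-path argument (finding adjacent $s,s'$ with $\pi_s(C)\neq\pi_{s'}(C)$), together with the sign flip $g$ and conjugation by $\pi_s^{-1}$, makes explicit what the paper compresses into ``applying signed permutations if necessary'', and your use of Theorem~\ref{main-fact} cleanly checks the NSC condition in every octant.
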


\begin{proof}
Assume that $C$ is NSC. Then there exists a
maximum independent set
$I\subset V(J_{\pm}(n,4))$
such that, for every hyperoctant $V_s$,
the sign-forgetting projection of $I\cap V_s$ is permutation-equivalent to $C$, and for at least one hyperoctant
$V_\tau$, the sign-forgetting projection of $I\cap V_\tau$ is $\pi(C)\neq C$
for some permutation $\pi\in S_n\setminus {\rm Aut}(C)$.

Applying signed permutations if necessary, we may assume that $I\cap V_{+1\cdots+1}=C$,
 and that
the sign-forgetting projection of  $I\cap H_-$ is $\pi(C)$,
where $H_-$ is the hyperoctant obtained from $V_{+1\cdots+1}$ by changing the sign
of the $m$-th coordinate for some $m\in[n]$. Thus
$H_-=f_m(H_+)$.

Since $I$ is independent in $J_{\pm}(n,4)$, the set
$C\cup f_m(\pi(C))$
is independent in $J_{\pm}(n,4)$. Hence, by
Corollary~\ref{compatibility},
the restriction
$\pi|_{[n]\setminus\{v\}}$
with $v=\pi^{-1}(m)$,
is an isomorphism between
\[
\bigl([n]\setminus\{v\},\,\{S\mid \mathds{1}_S\in C,\ v\notin S\}\bigr)
\]
and
\[
\bigl([n]\setminus\{m\},\,\{S\mid \mathds{1}_S\in C,\ m\notin S\}\bigr).
\]

Conversely, suppose that there exist
$m\in[n]$ and
$\pi\in S_n\setminus {\rm Aut}(C)$
such that
$\pi|_{[n]\setminus\{v\}}$
with $v=\pi^{-1}(m)$,
is an isomorphism of the above hypergraphs.

Set
$C'=\pi(C)$.
Then
$C'\neq C$,
since
$\pi\notin {\rm Aut}(C)$.
By Corollary~\ref{compatibility},
the set
$C\cup f_m(C')$
is independent in $J_{\pm}(n,4)$. Since both $C$ and $C'$ are optimal
$(n,4,4)$-codes, each contributes
$2^{n-1}|C|$
vertices after all sign choices outside the fixed coordinate.
Hence the construction of Lemma~\ref{conway} yields a maximum independent set
$I=I_+\cup I_-
\subset V(J_{\pm}(n,4))$.

The positive hyperoctant of $I$ projects to $C$, while the hyperoctant
with negative $m$-th coordinate projects to
$C'=\pi(C)\neq C$.
Therefore $I$ is a non-classical lifting of $C$, and thus $C$ is
nontrivially self-compatible.
\end{proof}
\begin{remark} Examples of NSC codes are optimal $(7,4,4)$ and $(9,4,4)$ codes. This shown in subsection~\ref{steiner-triple-leave}.
\end{remark}
\begin{remark}
It is tempting to conjecture that every optimal $(n,4,4)$-code that is not NSC must be egotistic. However, this is not the case.
We checked that among $11$ optimal $(11,4,4)$-codes given in~\cite{pub:7433}, the code $C_2$ is not NSC. On the other hand, $C_2$ is compatible with the distinct code $C_1$ from the same list. Consequently, $C_2$ is not egotistic.
Thus, nontrivial self-compatibility and non-egotisticity are independent phenomena. A code may fail to admit a nontrivial octant-wise self-gluing while still being compatible with another optimal code.
\end{remark}
\section{Listing of independent sets and decoy generation}\label{computer-aided}
Note that $J_\pm(n,k)$ admits all signed permutations as automorphisms. Our goal is to list all maximum independent sets in $J_\pm(n,k)$ modulo a signed permutation. This naturally leads to the following definition.
\begin{definition}
A family of independent sets $\{S_1,\dots,S_t\}$ in $J_{\pm}(n,k)$ is called a system of decoys if for every maximum independent set $I$ of $J_{\pm}(n,k)$, there exist $l \in [t]$ and a signed permutation $\pi$ such that
\[
\pi(S_l) \subseteq I.
\]
In other words, every maximum independent set contains, up to symmetry, one of the decoys.
\end{definition}
We use Algorithm~\ref{land} to compute all maximum independent sets in $J_{\pm}(n,4)$ up to signed permutations, starting from a given system of decoys. Note that our approach relies on the MOSEK's implementation~\cite{mosek} of the branch-and-bound method of Land and Doig~\cite{c317b127-1fe2-3862-8218-ad5f3daf32e7}, rather than on maximal clique enumeration methods in the style of Östergård~\cite{OSTERGARD2002197}. This choice is motivated by empirical observations: the Land--Doig method performs significantly more efficiently in our setting than clique-listing approaches.
We attribute this behavior to the fact that the signed Johnson graph $J_{\pm}(n,4)$ is close to being perfect, in the sense that the number of induced odd holes and antiholes is relatively small. Consequently, linear programming relaxations provide strong upper bounds, as is typical for perfect or nearly perfect graphs. These tight bounds substantially enhance the efficiency of the branch-and-bound procedure.

\begin{algorithm}
\small
\caption{Enumeration of all maximum independent sets compatible with a system of decoys}\label{land}
\begin{algorithmic}
\State {\bf Input:} A system of decoys $\mathcal D=\{S_1,\dots,S_t\}$, where each $S_i$ is an independent set in $J_{\pm}(n,4)$.
\State {\bf Output:} Independent sets $I\subseteq V(J_{\pm}(n,4))$ such that $S_i\subseteq I$ for some $i\in\{1,\dots,t\}$ and $|I|=16\,\alpha\bigl(J(n,4)\bigr)$.

\State Initialize $\mathcal F:=\varnothing$.

\For{each decoy $S\in\mathcal D$}
    \State Set $D:=S$.

    \State $U:=\{v\in V(J_{\pm}(n,4))\setminus D \mid \forall w\in D,\ \{v,w\}\notin E(J_{\pm}(n,4))\}$.

    \State $H:=J_{\pm}(n,4)[U]$.

    \State Solve the Integer Linear Programming (ILP) problem
    \[
    \max \sum_{v\in U} x_v
    \]
    \[
    x_u+x_v\leq 1 \qquad \text{for all } \{u,v\}\in E(H),\,\, x_v\in\{0,1\} \qquad \text{for all } v\in U.
    \]
     \hspace{19pt}by the branch-and-bound method.

    \While{the optimum value equals $16\,\alpha\bigl(J(n,4)\bigr)-|D|$}
        \State Let
        \[
        T:=\{v\in U\mid x_v=1\}.
        \]
        \State Record the independent set $I:=D\cup T$ in $\mathcal F$.
        \State Add the exclusion cut
        \[
        \sum_{v\in T} x_v \leq 16\,\alpha\bigl(J(n,4)\bigr)-|D|-1.
        \]
        \State Re-solve the ILP.
    \EndWhile
\EndFor

\State \Return $\mathcal F$.
\end{algorithmic}
\end{algorithm}

Let us now describe how we create a system of decoys. Let $C_1,\dots,C_c \subseteq \binom{[n]}{4}$ be representatives of all $(n,4,4)$-codes up to permutation equivalence. Then the family $\{C_1,\dots,C_c\}$ forms a system of decoys in $J_{\pm}(n,4)$. Indeed, all vectors in the positive hyperoctant of every maximum independent set $I$ in $J_{\pm}(n,4)$ correspond to a code $C \subseteq \binom{[n]}{4}$ (Theorem~\ref{main-fact}). Since $C$ is permutation-equivalent to one of the $C_i$, it follows that $I$ contains a signed copy of some $C_i$. The latter means that when searching for maximum independent sets up to signed permutation equivalence, we may assume that the positive hyperoctant is occupied by one of the codes $C_i \subseteq \binom{[n]}{4}$.

However, as experiments show, fixing the configuration in the positive hyperoctant alone does not sufficiently constrain the structure of a maximum independent set $I$. So, it is necessary to enlarge decoys to include configurations from neighboring hyperoctants (particularly those with one negative coordinate). This expansion radically improves the performance of Algorithm~\ref{land}.

So, let us assume that a decoy consists of the base code $C_i$ together with additional vectors placed in neighbouring hyperoctants.
Each neighbouring hyperoctant is indexed by a coordinate $u\in[n]$, and vectors placed in it corresponds to applying the map $f_u$ onto some $(n,4,4)$-code, or equivalently, as a set $f_u(h(C_j))$ for some permutation $h\in S_n$ and a base code $C_j$.

To determine which vectors may be placed in these neighbouring hyperoctants, we use Corollary~\ref{compatibility}. For each coordinate $u$ and each input code $C_j$, we consider all ways of embedding $C_j$ into the $u$-th hyperoctant so that it remains compatible with the base code $C_i$. This is achieved by comparing the hypergraphs of $C_i$ and $C_j$ (with deleted chosen vertices) and enumerating all isomorphisms between them. Each such isomorphism yields a permutation $h$, and hence a candidate configuration $f_u(h(C_j))$ in the $u$-th hyperoctant.

Since many of these permutations produce identical images of $C_j$, we quotient out this redundancy and retain only one representative for each distinct image. This results in a reduced set of options for each coordinate $u$.

A decoy is then obtained by selecting, independently for each $u\in[n]$, one such configuration and taking the union
\[
C_i \;\cup\; \bigcup_{u=1}^{n} f_u\bigl(h_u(C_{j_u})\bigr).
\]
If this union forms an independent set, we record it as a valid partial solution. The collection of all such decoys is denoted by $\mathcal D$.

Finally, we remove symmetry duplicates. Since we fixed the base code $C_i$, the remaining symmetry group is its stabilizer $\mathrm{Stab}(C_i)$. We therefore identify decoys that differ by an element of $\mathrm{Stab}(C_i)$ and retain one representative from each orbit. Formally, all the details can be found in Algorithm~\ref{decoys}.
\begin{algorithm}
\small
\caption{System of decoys generation}\label{decoys}
\begin{algorithmic}
\Require Codes $C_1,\dots,C_c\subseteq \binom{[n]}{4}$ and generators of $\mathrm{Stab}(C_i)\le S_n$ for all $i\in[c]$
\Ensure A reduced list of decoy representatives

\State $\mathcal D\gets \emptyset$

\For{$1\le i\le j\le c$}
    \For{$u,v\in[n]$}
        \If{the hypergraphs $\bigl([n]\setminus\{u\},\{S\in C_i:u\notin S\}\bigr)$ and $\bigl([n]\setminus\{v\},\{S\in C_j:v\notin S\}\bigr)$ are isomorphic}
            \State compute all isomorphisms
            \[
            h:[n]\setminus\{v\}\to [n]\setminus\{u\}
            \]
            \For{each such $h$}
                \State extend $h$ to a permutation of $[n]$ by setting $h(v)=u$
                \State add $h$ to $I_{i,j,u,v}$
            \EndFor
            \State set $I_{j,i,v,u}=\{h^{-1}:h\in I_{i,j,u,v}\}$
        \EndIf
    \EndFor
\EndFor

\For{$i,j\in[c]$ and $u\in[n]$}
    \State $I_{i,j,u}\gets \bigcup_{v\in[n]} I_{i,j,u,v}$
    \State partition $I_{i,j,u}$ by the relation
    \[
    h\sim h' \iff h(C_j)=h'(C_j)
    \]
    \State choose one representative from each class and denote the resulting set by $E_{i,j,u}$
\EndFor

\For{$i\in[c]$}
    \For{$u\in[n]$}
        \State $S(i,u)\gets \bigcup_{j\in[c]}\{(j,h):h\in E_{i,j,u}\}$
    \EndFor

    \For{each tuple $\bigl((j_1,h_1),\dots,(j_n,h_n)\bigr)\in \prod_{u=1}^{n} S(i,u)$}
        \State form
        \[
        D = C_i\cup \bigcup_{u=1}^{n} f_u\bigl(h_u(C_{j_u})\bigr)
        \]
        \If{$D$ is a kissing arrangement}
            \State add $D$ to $\mathcal D$
        \EndIf
    \EndFor
\EndFor

\State $\mathrm{Reps}\gets \emptyset$

\For{$i\in[c]$}
    \State let $\mathcal D_i\subseteq \mathcal D$ be the subfamily of decoys with base code $C_i$
    \State generate $\mathrm{Stab}(C_i)$ from the given generators
    \State partition $\mathcal D_i$ into orbits under the action of $\mathrm{Stab}(C_i)$
    \State choose one representative from each orbit and add it to $\mathrm{Reps}$
\EndFor

\State \Return $\mathrm{Reps}$
\end{algorithmic}
\end{algorithm}

After generating the system of decoys and enumerating all compatible maximum
independent sets, one still has to factorize the resulting list with respect
to signed permutations. This step is necessary because the same class of independent
sets may appear multiple times during the search procedure. Note that the
corresponding kissing arrangements are non-isometric whenever the underlying
independent sets are inequivalent under signed permutations. Indeed, for
$n\ge 9$, equivalence of maximum independent sets in $J_{\pm}(n,4)$ under
signed permutations is equivalent to isometricity of the associated kissing
arrangements (see Section~\ref{Isomorphism} for details). Consequently, factorization by signed permutations also yields
a classification of the resulting kissing arrangements up to isometry.

\section{The case of $n=12$}
For $v\in \mathbb{R}^n$, let ${\rm supp}(v) = \{i: v_i \neq 0\}$. 
For $A\subseteq [n]$, let us denote 
\[ 
T(A) = \{(x_1,...,x_n) \mid x_i \in\{+1, -1\} \text{ if } i\in A \text{ and } x_i =0 \text{ otherwise}\},
\]
and $T(C)=\cup_{v\in C}T({\rm supp}(v))$.

\begin{definition} 
A subset $S\subseteq [n]$ is called a free zone of a binary $(n,4,4)$-code $C$ (possibly, not optimal), if
$$
C = \{x\in C\mid {\rm supp}(x)\subseteq S\}\cup \{x\in C\mid |{\rm supp}(x)\cap S|\leq 2\}.
$$
\end{definition}
For $S\subseteq [n]$, define $J_\pm (S,4)$ as the induced graph $J_\pm (n,4)[\bigcup_{A\in {S \choose 4}}T(A)]$. Non-classical kissing arrangements in ${\mathbb R}^{12}$ can be constructed using the next lemma whose proof is straightforward.
\begin{lemma}\label{plug-in-J64} Let $S_1,\cdots,S_k\subseteq [n]$ be mutually disjoint free zones of a binary $(n,4,4)$-code. Then, for any  $\{I_j\}_{j=1}^k$ where $I_j$ is an independent set in $J_\pm (S_j,4)$,
$$
\bigcup_{j=1}^k I_j\cup \{T({\rm supp}(x))\mid x\in C, \forall j, {\rm supp}(x)\not\subseteq S_j\}
$$
is an independent set in $J_\pm (n,4)$.
\end{lemma}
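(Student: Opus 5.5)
The plan is to check directly that no two distinct vectors of the union have inner product $3$. Adjacency $x^\top y=3$ forces $|\operatorname{supp}(x)\cap\operatorname{supp}(y)|=3$ with matching signs on that intersection. So it suffices to show that for any two vectors in the union, either their supports meet in at most two points, or they lie in a common set already known to be independent. Write $L=\bigcup\{T(\operatorname{supp}(x))\mid x\in C,\ \operatorname{supp}(x)\not\subseteq S_j\ \forall j\}$ for the lifted part. I will split into three cases according to where the two vectors come from.

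\emph{Case 1: both vectors in $L$.} $L$ is a subset of the classical lifting $T(C)$. Its supports are supports of codewords of $C$, so two distinct supports meet in at most $2$ points. If the supports coincide, the inner product is at most $4-2=2$, since the vectors are distinct sign patterns on the same $4$-set. So $L$ is independent.

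\emph{Case 2: both vectors in $\bigcup_j I_j$.} If both lie in the same $I_j$, independence of $I_j$ in $J_\pm(S_j,4)$, an induced subgraph, settles it. If $x\in I_j$ and $y\in I_l$ with $j\ne l$, then $\operatorname{supp}(x)\subseteq S_j$ and $\operatorname{supp}(y)\subseteq S_l$. Since $S_j\cap S_l=\varnothing$, the supports are disjoint and $x^\top y=0$.

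\emph{Case 3: $x\in I_j$, $y\in L$.} This is the only place the free-zone hypothesis is used, and it is the main (though easy) step. Here $\operatorname{supp}(y)=\operatorname{supp}(c)$ for some $c\in C$ not contained in any $S_l$, in particular not in $S_j$. Since $S_j$ is a free zone, $C$ is the union of codewords with support inside $S_j$ and codewords meeting $S_j$ in at most $2$ points. The first option is excluded, so $|\operatorname{supp}(y)\cap S_j|\le 2$. Because $\operatorname{supp}(x)\subseteq S_j$, we get $|\operatorname{supp}(x)\cap\operatorname{supp}(y)|\le 2$ and therefore $x^\top y\le 2$.

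Combining the three cases, the union is independent in $J_\pm(n,4)$. The only subtlety to watch is that the free-zone definition is stated in terms of $C$ as a whole, so it must be applied separately to each $S_j$. Disjointness of the zones is needed only in Case 2.
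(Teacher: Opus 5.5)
Your proof is correct and is the routine verification the paper intends. The paper omits this proof as ``straightforward.'' Your three-case check covers lifted codewords, pairs inside the zones (using disjointness), and mixed pairs (using the free-zone property to get $|\operatorname{supp}(x)\cap\operatorname{supp}(y)|\le 2$). This is the same support-intersection argument the paper writes out for the analogous Construction~3 lemma.
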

\noindent {\bf Construction 1.} The description of an example of a binary $(12,4,4)$-code with a free zone can be found in~\cite{Kalbfleisch1968131}. For completeness, let us reproduce that construction here. We identify $[12]$ with a union of $\{1,2,3,4,5,6\}$ and $\{1',2',3',4',5',6'\}$. 

\begin{wrapfigure}{r}{0.25\textwidth}
\centering
\includegraphics[width=0.2\textwidth]{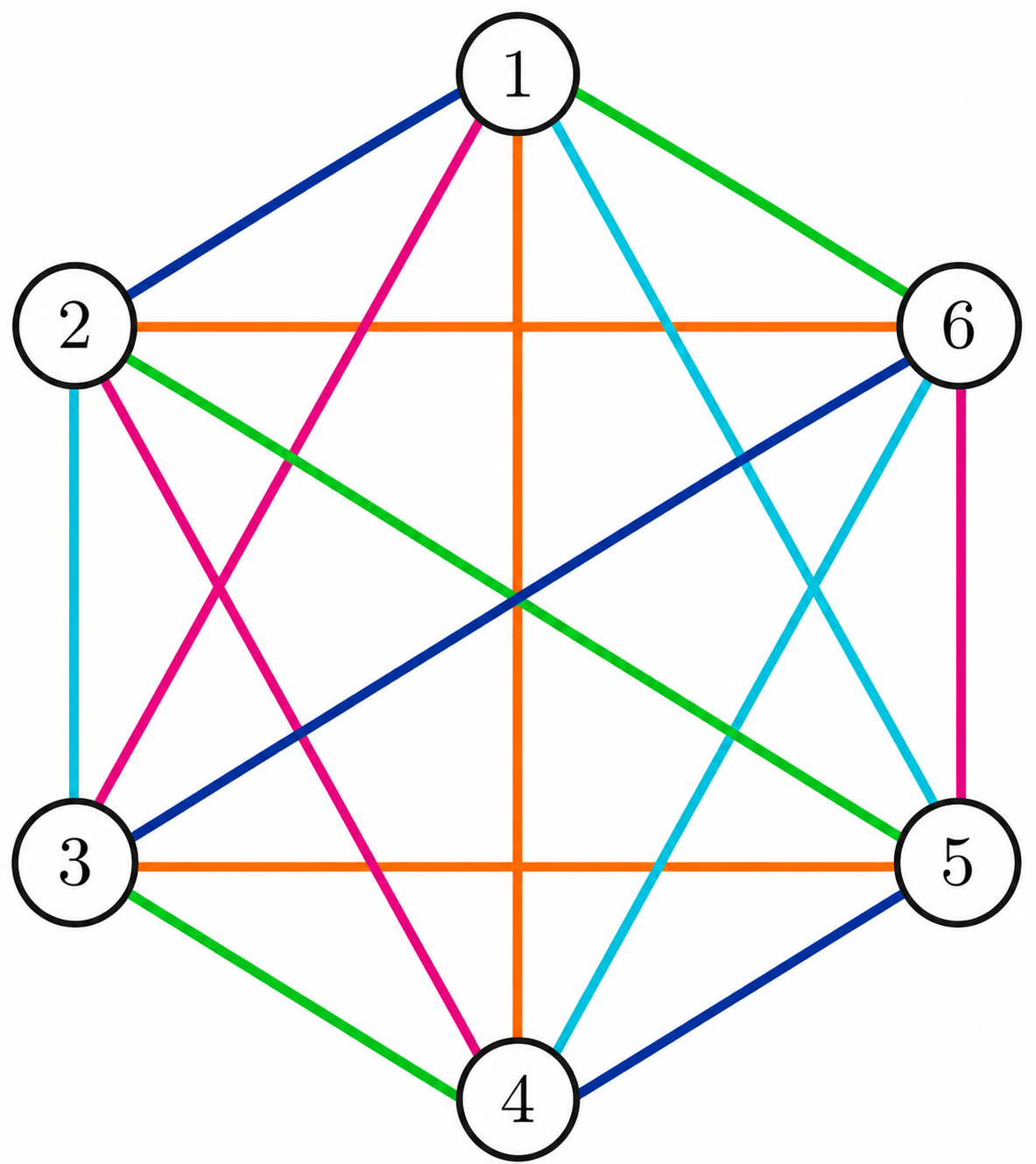}
\caption{\small 1-factorization of $K_6$.}\label{1-factor}
\end{wrapfigure}

Let us consider all sets of the form $\{x,y,z',t'\}$ where the edges $\{x,y\},\{z,t\}$ are of the same colour in the graph with coloured edges given in Figure~\ref{1-factor}. This figure plots a well-known 1-factorization of the complete graph on 6 vertices, i.e. the collection of perfect matchings (corresponding to edges of the same colour) whose disjoint union gives all edges.
The total number of the latter quadruples is $9\cdot 5=45$. In this $(12,4,4)$-code, $[6]$ and $\{1',\cdots,6'\}$ are two free zones.

These two free zones can be occupied by arbitrary maximum independent sets $I_1$ in $J_\pm([6],4)$ and $I_2$ in $J_\pm(\{1',2',3',4',5',6'\},4)$ respectively. Lemma~\ref{plug-in-J64} guarantees that this construction defines an independent set in $J_\pm(12,4)$. Since $\alpha(J_\pm(6,4))=2^4\cdot 3$, the cardinality of this set is $2\cdot 2^4\cdot 3+2^4\cdot 45=2^4\cdot 51$. From $\alpha(J(12,4))=51$ and Theorem~\ref{main-fact} we obtain that all such independent sets are maximum. Using computer-aided methods, such independent sets can all be listed and factorized modulo signed permutations. This gives us possible $1157$ non-isomorphic maximum independent sets.




\noindent {\bf Construction 2.} Best~\cite{Best1978} also describes another way to define the set of 45 quadruples taking two elements from both $[6]$ and $\{1',\cdots,6'\}$ with the property of $(12,4,4)$-code. We omit its description here. The total number of possible non-isomorphic maximum independent sets in $J_\pm(12,4)$ arising from this construction is $278$.

\noindent {\bf Construction 3.} This construction is based on the interchange of two elements taken from $[6]$ and $\{1',\cdots,6'\}$. The idea also originates in~\cite{Best1978} where it was employed to construct optimal $(12,4,4)$-codes. Let us adapt it to define maximum independent sets in $J_\pm(12,4)$.

Suppose that we have two isomorphic graphs with coloured edges and disjoint sets of vertices $\{a,b,c,d\}$ and $\{p,q,r,s\}$ as drawn on Figure~\ref{quadruples}. A set of quadruples combining equicoloured edges is the following set:

\[
\{\{a,b,p,q\},\quad
\{a,b,r,s\},\quad
\{a,c,p,r\},\quad
\{a,d,p,s\},
\]
\[
\{b,c,q,r\},\quad
\{b,d,q,s\},\quad
\{c,d,p,q\},\quad
\{c,d,r,s\}\}.
\]
We delete two quadruples from that list, $\{a,b,p,q\}$ and $\{c,d,r,s\}$, and replace them with $\{a,b,c,d\}$ and $\{p,q,r,s\}$, obtaining
\[
\mathcal{B}
= \{\{a,b,c,d\},\quad
\{a,b,r,s\},\quad
\{a,c,p,r\},\quad
\{a,d,p,s\},
\]
\[
\{b,c,q,r\},\quad
\{b,d,q,s\},\quad
\{c,d,p,q\},\quad
\{p,q,r,s\}\}.
\]
These are exactly the quadruples to be modified. Exchanging $d$ and $r$ in all those quadruple will result in a set
\begin{equation*}
\begin{split}
\mathcal{D}=\{a,b,c,r\},\quad
\{a,b,d,s\},\quad
\{a,c,p,d\},\quad
\{a,r,p,s\}
\\
\{b,c,q,d\},\quad
\{b,r,q,s\},\quad
\{c,r,p,q\},\quad
\{p,q,d,s\}.
\end{split}
\end{equation*}
\begin{wrapfigure}{r}{0.4\textwidth}
\centering
\includegraphics[width=0.4\textwidth]{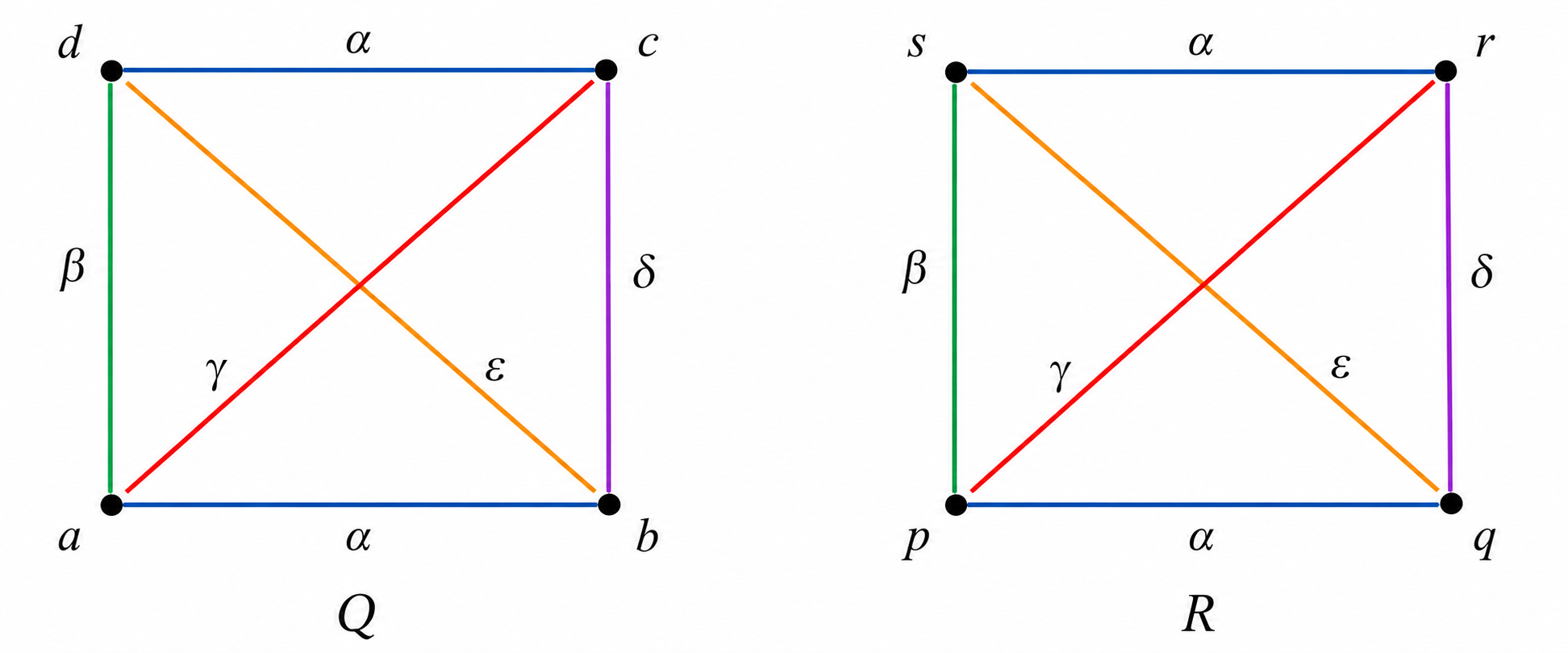}
\caption{\small Isomorphic quadruples.}\label{quadruples}
\end{wrapfigure}

The following lemma gives a construction of an independent set in $J_\pm(12,4)$.
\begin{lemma}
Let $A=\{a,b,c,d,e,f\}$, $B=\{p,q,r,s,t,u\}$ be two 6-element disjoint sets. We assume that quadruples $\{a,b,c,d\}$ and
$\{p,q,r,s\}$ define sets $\mathcal{B}$ and $\mathcal{D}$. Let
$M=\{\{a,b\},\{c,d\},\{e,f\}\}$
be a matching on $A$, and let
$N=\{\{p,q\},\{r,s\},\{t,u\}\}$
be a matching on $B$. Suppose that $M$ and $N$ are extended to $1$-factorizations of complete graphs with sets of vertices $A$ and $B$, respectively, and that these two $1$-factorizations are coloured by the same five colours consistent with Figure~\ref{quadruples}. Let $I_1\subset J_{\pm}(A,4)$ and $I_2\subset J_{\pm}(B,4)$ be independent sets such that
\[
T(\{a,b,c,d\})\subseteq I_1,
\qquad
T(\{p,q,r,s\})\subseteq I_2.
\]
Define
\[
\mathcal F
=
\left\{
\{x,y,z,t\}\mid
\{x,y\}\subset A,\ \{z,t\}\subset B,
\text{ and }\{x,y\},\{z,t\}\text{ have the same colour}
\right\}.
\]
and
\[
\begin{aligned}
X={}&
\bigl(I_1\setminus T(\{a,b,c,d\})\bigr)
\cup
\bigl(I_2\setminus T(\{p,q,r,s\})\bigr)\cup \bigcup_{S\in \mathcal{D}} T(S)\cup \bigcup_{C\in \mathcal F\setminus \mathcal B} T(C).
\end{aligned}
\]
Then $X$ is an independent set in $J_{\pm}(A\cup B,4)$.
\end{lemma}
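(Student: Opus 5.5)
The plan is to check directly that no two vectors of $X$ are adjacent in $J_\pm(A\cup B,4)$. Two vectors $x,y$ are adjacent exactly when $|\operatorname{supp}(x)\cap\operatorname{supp}(y)|=3$ and $x,y$ agree in sign on that common triple. Hence a vector $x$ is adjacent to some element of $T(S)$ if and only if $|\operatorname{supp}(x)\cap S|=3$. I will split $X$ into the four pieces $I_1' = I_1\setminus T(\{a,b,c,d\})$, $I_2'=I_2\setminus T(\{p,q,r,s\})$, $\bigcup_{S\in\mathcal D}T(S)$ and $\bigcup_{C\in\mathcal F\setminus\mathcal B}T(C)$, and check each pair of pieces.

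\textbf{Step 1 (the pieces coming from $I_1,I_2$).} Pairs inside $I_1'$ or inside $I_2'$ are non-adjacent because $I_1,I_2$ are independent. Pairs with one vector in $I_1'$ and one in $I_2'$ have disjoint supports. The key observation is the following. Since $T(\{a,b,c,d\})\subseteq I_1$ contains all sign patterns on $\{a,b,c,d\}$, any $x\in I_1'$ must satisfy $|\operatorname{supp}(x)\cap\{a,b,c,d\}|\le 2$. Otherwise some sign pattern on $\{a,b,c,d\}$ agrees with $x$ on the common triple, giving an edge. The same holds for $I_2'$ and $\{p,q,r,s\}$.

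Now I compare with the lifted blocks. Every block of $\mathcal F$ meets $A$ in exactly two points, so it meets $\operatorname{supp}(x)\subseteq A$ in at most $2$ points. Every block of $\mathcal D$ satisfies $D\cap A\subseteq\{a,b,c,d\}$ and $D\cap B\subseteq\{p,q,r,s\}$; this is read off from the explicit list. Hence $|\operatorname{supp}(x)\cap D|\le|\operatorname{supp}(x)\cap\{a,b,c,d\}|\le 2$. The symmetric argument handles $I_2'$.

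\textbf{Step 2 (the lifted blocks).} It remains to show that $\mathcal D\cup(\mathcal F\setminus\mathcal B)$ is a family of $4$-sets with pairwise intersections of size at most $2$.
\begin{itemize}
\item[(i)] $\mathcal F$ is a $(12,4,4)$-code. Suppose two blocks of $\mathcal F$ share $3$ points. Then they share a pair on one side, say the $A$-side, and one point on the other side. Equal pairs have equal colour, so the two $B$-pairs have the same colour and share a point. Since a colour class is a matching, the two $B$-pairs coincide, so the blocks are identical.
\item[(ii)] $\mathcal B$ is a code. Its blocks other than $\{a,b,c,d\}$ and $\{p,q,r,s\}$ lie in $\mathcal F$, and those two blocks meet every other block of $\mathcal B$ in at most $2$ points. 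Therefore $\mathcal D=\sigma(\mathcal B)$ is also a code, where $\sigma=(d\ r)$.
\item[(iii)] Finally, I check $\mathcal D$ against $\mathcal F\setminus\mathcal B$ using the colour structure fixed by Figure~\ref{quadruples}. The colour classes of $ab,cd,ef$ and $pq,rs,tu$ form one colour. The colours of $ac,bd$ match those of $pr,qs$, and similarly for the remaining pairs inside $\{a,b,c,d\}$ and $\{p,q,r,s\}$. Blocks of $\mathcal D$ of type $3+1$ have their triple inside $\{a,b,c,d\}$ or inside $\{p,q,r,s\}$. A block $F\in\mathcal F$ meeting such a $D$ in $3$ points must contain a pair from that triple together with the single point of $D$ on the other side. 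The colour consistency forces all such $F$ to be among the eight blocks of $\mathcal F$ listed before $\mathcal B$, i.e. among those with both pairs inside $\{a,b,c,d\}$ and $\{p,q,r,s\}$. Those eight are either in $\mathcal B$ (hence removed) or are explicitly checked against $\mathcal D$. For instance, the $\mathcal F$-blocks containing $\{a,b\}$ are $\{a,b,p,q\},\{a,b,r,s\},\{a,b,t,u\}$. The first two lie in $\mathcal B$, and the third meets $\{a,b,c,r\}$ only in $\{a,b\}$.
\end{itemize}
The $2+2$ blocks of $\mathcal D$ are handled the same way: both pairs lie inside $\{a,b,c,d\}\cup\{p,q,r,s\}$, and a $3$-intersection again forces the other block into the finite list.

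\textbf{Main obstacle.} I expect Step 2(iii) to be the main obstacle. It is a finite verification, but it depends on the precise colouring consistent with Figure~\ref{quadruples}, in particular on which pairs in $\{e,f\}\times\{a,b,c,d\}$ and $\{t,u\}\times\{p,q,r,s\}$ carry the colours of the pairs touching $d$ and $r$. I would first reduce to blocks meeting $\{d,r\}$, since blocks avoiding $d$ and $r$ are unchanged by $\sigma$ and are covered by (i) and (ii). I would then do the short explicit case check on those blocks.
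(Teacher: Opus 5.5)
Your Step~1 is the paper's argument. Step~2 is correct but organised differently from the paper.

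\textbf{How the two routes differ.} The paper never compares $\mathcal D$ with $\mathcal F\setminus\mathcal B$ block by block. It shows that $\mathcal D$ and $\mathcal B$ cover exactly the same triples: this is trivial for triples avoiding $d,r$, and the rest are handled by an explicit list. So replacing $\mathcal B$ by $\mathcal D$ inside the code $\mathcal F\cup\{\{a,b,c,d\},\{p,q,r,s\}\}$ keeps every triple covered at most once, and only the fact that $\mathcal F$ is a code remains to be shown. You instead get that $\mathcal D$ is a code for free from $\mathcal D=(d\ r)(\mathcal B)$. You then locate the $\mathcal F$-blocks that could share a triple with a block of $\mathcal D$. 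The paper's check is closed on the eight points $a,b,c,d,p,q,r,s$, and it uses the colouring only to know that six blocks of $\mathcal B$ lie in $\mathcal F$. Yours needs a bit more colour information, but it identifies exactly which blocks are dangerous.

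\textbf{Making your colour step precise.} Your step ``colour consistency forces'' can be justified as follows.
\begin{itemize}
\item Every colour class other than $M$ contains exactly one edge inside $\{a,b,c,d\}$.
\item The bijection $\phi\colon a\mapsto p,\ b\mapsto q,\ c\mapsto r,\ d\mapsto s$ preserves colours of pairs inside $\{a,b,c,d\}$.
\item Each block of $\mathcal D$ has the form $T\cup\{\phi(y)\}$, where $T$ is a triple in $\{a,b,c,d\}$ and $y$ is the vertex of $T$ outside its $M$-pair, or the mirror form on the $B$-side.
\end{itemize}
Hence the $\mathcal F$-block through a pair of $T$ and $\phi(y)$ is always one of $\{a,c,p,r\}$, $\{a,d,p,s\}$, $\{b,c,q,r\}$, $\{b,d,q,s\}$, $\{a,b,r,s\}$, $\{c,d,p,q\}$, all of which lie in $\mathcal B$. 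The edges meeting $e,f,t,u$ never enter, which settles the dependence you were worried about.

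\textbf{Two slips to fix.} First, $\{a,b,p,q\}$ is \emph{not} in $\mathcal B$. It and $\{c,d,r,s\}$ are exactly the two blocks deleted from the list of eight, so both survive in $\mathcal F\setminus\mathcal B$. This is harmless, since each of them meets every block of $\mathcal D$ in exactly two points, but your example misstates membership. Second, $\mathcal D$ has no $2+2$ blocks. The transposition $(d\ r)$ turns every block of $\mathcal B$ into one of type $3+1$ or $1+3$, so your closing remark about $2+2$ blocks is vacuous.
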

\begin{proof} Since $I_1\setminus T(\{a,b,c,d\})$ and $I_2\setminus T(\{p,q,r,s\})$ are independent, we can omit considering intersection of two supports from $I_1\setminus T(\{a,b,c,d\})$ or two supports from $I_2\setminus T(\{p,q,r,s\})$.
So it is enough to prove that any other two distinct supports appearing in the
definition of $X$ intersect in at most two points . Indeed, if two
supports intersect in at most two coordinates, then the scalar product of
any two corresponding signed vectors is at most $2$.

First, since
$T(\{a,b,c,d\})\subseteq I_1$ and $I_1$ is independent, every vector
$x\in I_1\setminus T(\{a,b,c,d\})$
has support intersecting $\{a,b,c,d\}$ in at most two points. Otherwise
one could choose a vector $y\in T(\{a,b,c,d\})$ with the same signs on
three common coordinates, giving
$\langle x,y\rangle\ge 3$,
contradicting independence of $I_1$. Hence
\[
|\operatorname{supp}(x)\cap \{a,b,c,d\}|\le 2.
\]
Similarly, every vector $z\in I_2\setminus T(\{p,q,r,s\})$
satisfies
$|\operatorname{supp}(z)\cap \{p,q,r,s\}|\le 2$.

Every support in $\mathcal D$ intersects $A$ only inside
$\{a,b,c,d\}$ and intersects $B$ only inside $\{p,q,r,s\}$. Therefore
the previous observation implies that every support from
$I_1\setminus T(\{a,b,c,d\})$ or
$I_2\setminus T(\{p,q,r,s\})$ intersects every support from
$\mathcal D$ in at most two points.

Also, $C\in \mathcal F\setminus \mathcal B$ intersects every support from
$I_1\setminus T(\{a,b,c,d\})$ in at most two points. Indeed, $C\cap A$
has size two, while vectors from $I_1$ are supported entirely in $A$.
Similarly, $C$ intersects every support from
$I_2\setminus T(\{p,q,r,s\})$ in at most two points.

Now it remains to prove that distinct supports from $\bigcup_{S\in \mathcal{D}} T(S)\cup \bigcup_{C\in \mathcal F\setminus \mathcal B} T(C)$ intersect in at most 2 elements. This is equivalent to claiming that quadruples from $\mathcal{D}\cup (\mathcal F\setminus \mathcal B)$ correspond to a $(12,4,4)$-code. In fact, this is exactly the code that was constructed in~\cite{Best1978}. Let us reproduce here arguments from that paper, for completeness.

If the triple $\{x,y,z\}\subseteq \{a,b,c,d,p,q,r,s\}$ is such that $d,r\notin \{x,y,z\}$, then that triple is covered by   $\mathcal D$ if and only if it is covered by $\mathcal B$.
Other triples covered by $\mathcal D$ and  $\mathcal B$ are listed below adjacently:
\[
\begin{gathered}
abr,\ acr,\ bcr,
\qquad
abd,\ acd,\ bcd,
\\
abd,\ ads,\ bds,
\qquad
abr,\ ars,\ brs,
\\
acd,\ adp,\ cdp,
\qquad
acr,\ apr,\ cpr,
\\
arp,\ ars,\ rps,
\qquad
adp,\ ads,\ dps,
\\
bcd,\ bdq,\ cdq,
\qquad
bcr,\ bqr,\ cqr,
\\
brq,\ brs,\ qrs,
\qquad
bdq,\ bds,\ dqs,
\\
cpr,\ crq,\ pqr,
\qquad
cdp,\ cdq,\ dpq,
\\
dpq,\ dps,\ dqs,
\qquad
pqr,\ prs,\ qrs.
\end{gathered}
\]
It can be checked that these are exactly the same set of triples. So replacing $\mathcal B$ with $\mathcal D$ in the union $\mathcal{D}\cup (\mathcal F\setminus \mathcal B)$ maintains the property that any triple is covered only once. So it remains to prove that any triple is covered only once in $\mathcal F$.

Take two distinct supports
$C_1=\{x_1,y_1,z_1,t_1\}$, $C_2=\{x_2,y_2,z_2,t_2\}$
from $\mathcal F$. Each $C_i$ is the union of an
edge in $A$ and an edge in $B$, and the two edges have the same colour.
If $C_1$ and $C_2$ correspond to the same colour, then their $A$-edges
are either equal or disjoint, and their $B$-edges are either equal or
disjoint, because each colour class is a perfect matching. Since
$C_1\ne C_2$, we get
$|C_1\cap C_2|\le 2$.
If they correspond to different colours, then their $A$-edges share at
most one vertex and their $B$-edges share at most one vertex. Hence again
$|C_1\cap C_2|\le 2$.

Thus, $X$ is independent in
$J_{\pm}(A\cup B,4)$.
\end{proof}

There are exactly seven non-isomorphic maximum independent sets in
$J_{\pm}(6,4)$. Let
$I_i,\ I_j\subset J_{\pm}(6,4)$
be two canonical representatives, possibly equal. We regard $I_i$ as an
independent set on the coordinate set
$\{1,2,3,4,5,6\}$,
and $I_j$ as an independent set on the disjoint coordinate set
$\{1',2',3',4',5',6'\}$.

Suppose that
$T(\{a,b,c,d\})\subseteq I_i$, $T(\{p,q,r,s\})\subseteq I_j$.
Consider the matchings
\[
M=\{\{a,b\},\{c,d\},\{e,f\}\}
\]
on $\{1,2,3,4,5,6\}$ and
\[
N=\{\{p,q\},\{r,s\},\{t,u\}\}
\]
on $\{1',2',3',4',5',6'\}$, where
$\{e,f\}
=
[6]\setminus\{a,b,c,d\}$,
$\{t,u\}
=
\{1',2',3',4',5',6'\}\setminus\{p,q,r,s\}$.

Each of the matchings $M$ and $N$ extends to exactly two
$1$-factorizations of the corresponding complete graph. Choose one such
extension for $M$ and one such extension for $N$, and colour the two
$1$-factorizations by the same five colours in such a way that the
correspondence
\[
a\leftrightarrow p,\qquad
b\leftrightarrow q,\qquad
c\leftrightarrow r,\qquad
d\leftrightarrow s
\]
extends to a colour-preserving isomorphism between the two coloured complete
graphs.

Define
\[
\mathcal F=
\left\{
\{x,y,z,t\}\mid
\{x,y\}\subset [6],\
\{z,t\}\subset \{1',2',3',4',5',6'\},
\text{ and }\{x,y\},\{z,t\}\text{ have the same colour}
\right\}.
\]
We then define
\[
\begin{aligned}
I={}&
\bigl(I_i\setminus T(\{a,b,c,d\})\bigr)
\cup
\bigl(I_j\setminus T(\{p,q,r,s\})\bigr)\cup \bigcup_{S\in \mathcal{D}} T(S)\cup \bigcup_{C\in \mathcal F\setminus \mathcal B} T(C).
\end{aligned}
\]
By the previous lemma, $I$ is an independent set in $J_{\pm}(12,4)$.

Note that the canonical independent set $I_7$ does not contain a support
$A$ such that $T(A)$ is fully contained in $I_7$; therefore it cannot
be used in Construction~3. Each of the independent sets
$I_2,\ldots,I_6$
contains exactly one such support. In the unique classical independent set
$I_1$, all such supports are equivalent under the automorphism group of
$I_1$. Consequently, without loss of generality, we may always choose the
support $\{1,2,3,4\}$ in the case of $I_1$.

Thus, varying $i,j\in [6]$ and the choices of extensions of $M$ and
$N$ to $1$-factorizations produces a family of independent sets in
$J_{\pm}(12,4)$. By construction, there are
$6\cdot 6\cdot 4=144$
such labelled independent sets. Factoring this family by the action of the
signed permutation group yields a collection of $140$ pairwise
non-isomorphic independent sets.

Using computer-aided methods described in Section~\ref{computer-aided} we were able to find all possible maximum independent set in $J_{\pm}(12,4)$ up to signed permutations. The following theorem lists all possibilities.
\begin{theorem}\label{classification-Jpm12}
Every maximum independent set in $J_{\pm}(12,4)$ is obtained by one of the following constructions:
\begin{enumerate}
\item a maximum independent set obtained from the Construction~1 ($1157$ non-isomorphic sets);
\item a maximum independent set obtained from the Construction~2 ($278$ non-isomorphic sets);
\item a maximum independent set obtained from the Construction~3 ($140$ non-isomorphic sets).
\item  a classical independent set corresponding to one of the egotistic codes with
indices $14\le i\le 17$ in Best's list~\cite{Best1978} of optimal
$(12,4,4)$-codes.
\end{enumerate}
Overall, we have $1579=1157+278+140+4$ non-isomorphic maximum independent sets in $J_\pm(12,4)$.
\end{theorem}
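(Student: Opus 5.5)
This is a computer-assisted classification, so the proof will be an exhaustive search organized around Theorem~\ref{main-fact} and Corollary~\ref{compatibility}, followed by a structural identification of the output. By Theorem~\ref{main-fact}, every maximum independent set $I\subset V(J_\pm(12,4))$ has $|I|=16\cdot 51$, and each hyperoctant section $\operatorname{SF}(I\cap V_s)$ is an optimal $(12,4,4)$-code. Applying a signed permutation, we may assume $\operatorname{SF}(I\cap V_{(1,\dots,1)})=C_i$ for one of Best's $17$ representatives $C_1,\dots,C_{17}$. For each of the $12$ neighbouring hyperoctants $f_u(V_{(1,\dots,1)})$, the section is $h_u(C_{j_u})$ for some code index $j_u$ and permutation $h_u$. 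By Corollary~\ref{compatibility}, $h_u$ restricted to $[12]\setminus\{h_u^{-1}(u)\}$ must be a hypergraph isomorphism between the derived hypergraphs.

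Consequently, the output of Algorithm~\ref{decoys} is a system of decoys: every maximum $I$ contains, up to $\mathrm{Stab}(C_i)$, one of the listed unions $C_i\cup\bigcup_u f_u(h_u(C_{j_u}))$. I will prove this covering property first. It follows directly from the corollary and the fact that the images $h(C_j)$ are determined up to the equivalence used to choose the representatives $E_{i,j,u}$.

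Next, Algorithm~\ref{land} is run on each decoy. The ILP with exclusion cuts enumerates every extension to an independent set of size $16\cdot 51$. Correctness rests on two facts: the restriction to $U$ (vertices nonadjacent to the decoy) loses no solutions, and each cut removes exactly one previously found $T$ while leaving all others feasible, since $\sum_{v\in T}x_v\le |T|-1$ excludes only $T$ among sets of that size. The resulting list is then reduced modulo the full signed permutation group, for instance by computing canonical forms of the associated vertex-coloured graphs. This yields exactly $1579$ classes.

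The structural part assigns each class to a construction. We generate Constructions~1--3 explicitly: plug all pairs of the $7$ maximum independent sets of $J_\pm(6,4)$ into the two free zones for Construction~1, do the analogous plug-in for Construction~2, and take the $144$ labelled sets of Construction~3. We also lift the codes $14$--$17$, which are classical sets. Canonizing all of these gives $1157$, $278$, $140$ and $4$ classes respectively. We must check that these four families are pairwise disjoint up to signed permutation, which is confirmed by comparing canonical forms, and that their union coincides with the $1579$ enumerated classes. Equality of the two finite lists of canonical forms then proves the theorem.

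The main obstacle is the completeness and certifiability of the exhaustive search. The decoy product $\prod_u S(i,u)$ can be very large for codes with large automorphism groups, such as those underlying Construction~1, and we rely on an ILP solver's optimality claims. I would first try to prune the product incrementally, checking pairwise compatibility of the neighbouring-octant choices $f_u(h_u(C_{j_u}))$ and $f_{u'}(h_{u'}(C_{j_{u'}}))$ before forming full tuples. I would also cross-validate the solver by confirming that each known set from Constructions~1--3 is recovered from some decoy.
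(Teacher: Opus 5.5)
Your pipeline is the same as the paper's: decoys built from Theorem~\ref{main-fact} and Corollary~\ref{compatibility}, Land--Doig enumeration with exclusion cuts on each decoy, and reduction by canonical labelling. Your arguments for the covering property and for the exclusion cuts are correct. The step that fails is the structural identification for Constructions~1 and~2. You plug ``all pairs of the $7$ maximum independent sets of $J_\pm(6,4)$'' into the two free zones. That gives at most $7^2=49$ labelled sets per construction, hence at most $49$ isomorphism classes. It therefore cannot produce $1157$ classes for Construction~1 or $278$ for Construction~2. Your final equality of canonical-form lists would fail, with more than a thousand enumerated classes left unattributed.

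The missing point is that the isomorphism type of $I_1$ inside $J_\pm([6],4)$ does not determine the isomorphism type of the global set. For Construction~1, the lifted frame $\bigcup T(C)$ over the $45$ mixed quadruples is invariant under all sign changes, but it has very few coordinate-permutation symmetries. A permutation of $[6]$ acting alone preserves the frame only if it fixes every colour class of the $1$-factorization of $K_6$, and that forces the identity. In general, a permutation of $[6]$ must be compensated by a permutation of $\{1',\dots,6'\}$ that induces the same permutation of the five colours. So placing the same class at different positions inside a zone generally gives non-isomorphic maximum independent sets of $J_\pm(12,4)$; the same issue arises for Construction~2.

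As the paper says, the zones must be filled with \emph{arbitrary} maximum independent sets. Concretely, take the full $S_6$-orbit of each of the $7$ classes in each zone; sign changes can be ignored because the frame is sign-invariant. Form all pairs of placements, or orbit representatives of pairs under the frame's stabilizer, and only then canonize. With this correction your argument matches the paper's computer-assisted proof.
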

\section{Characterization of maximum independent sets}
This section is dedicated to results of our computations of maximum independent sets in $J_\pm(n,4)$ for $5\leq n\leq 12$. For $n\leq 7$ we applied the igraph library~\cite{igraph} for a listing of all independent sets and distinguished isomorphic sets by comparing their full orbits. This approach drastically slows down already for $n=8$, so for $n\geq 8$ we used the following pipeline: (a) the computation of decoys by Algorithm~\ref{decoys} (using networkx library~\cite{SciPyProceedings_11}); (b) listing of all maximum independent sets compatible with the system of decoys (using MOSEK~\cite{mosek} for the branch-and-bound part); (c) the computation of the list of non-isomorphic maximum independent sets. The part (c) was needed only for $n=12$, as the output of the step (b) was already small enough in other cases.  In part (c) we encode each arrangement as a colored auxiliary graph and use the canonical labeling algorithm of nauty library~\cite{MCKAY201494} to compute a canonical certificate.

\subsection{The case of $n=6$}\label{case6}
Although 6 of 7 maximum independent sets are non-classical, they all can be understood as transformations of the classical one. For each independent set $I$ of this kind it is useful to characterize it using the set ${\rm supp}(I) = \{{\rm supp}(x)\mid x\in I\}$ and the edge set $E_I=\{[6]\setminus {\rm supp}(x)\mid x\in I\}$ of a graph $G_I=([6],  E_I)$. Also, by ${\rm profile}(I)$ we denote the multiset of numbers $\{|\{x\in I\mid {\rm supp}(x)=a\}|\}_{a\in {\rm supp}(I)}$. If $v\in \mathbb{R}^n$ and $v = (v_1,...,v_n)$, then for $a,b \in [n]$, we define
\[
v_{a\leftrightarrow b} = \pi(v),
\]
where $\pi$ is a permutation that swaps $a$ and $b$. 

\begin{enumerate}
\item The classical maximum independent set is given below.
\[
\begin{array}{c|c|cccccc}
{\rm supp}(x) & [6]\setminus {\rm supp}(x)
& x_1 & x_2 & x_3 & x_4 & x_5 & x_6\\
\hline
1234 & 56 & \pm1 & \pm1 & \pm1 & \pm1 & 0 & 0\\
1256 & 34 & \pm1 & \pm1 & 0 & 0 & \pm1 & \pm1\\
3456 & 12 & 0 & 0 & \pm1 & \pm1 & \pm1 & \pm1
\end{array}
\]
For this set we have, $G_I\simeq 3K_2$ and  
${\rm profile}(I)\simeq (16,16,16)$. The first 5 non-classical maximum independent sets can be understood as a swap of coordinates applied to vectors from  $T([6]\setminus \{5,6\})$, $T([6]\setminus \{3,4\})$ (that belong to the classical independent set) without touching $T([6]\setminus \{1,2\})$.
\item Let us apply the transformation
$$H_1(x) = \begin{cases}
        x_{3\leftrightarrow 6}, & \text{if } (x_1, x_2) =(1,1)\text{ and }x\in T([6]\setminus \{5,6\}) \cup T([6]\setminus \{3,4\}) \\
        x, & \text{otherwise }  \\
    \end{cases}.$$
on each vector of the classical independent set. Then, we obtain the first non-classical maximum independent set:
\[
\begin{array}{c|c|ccccccc}
{\rm supp}(x) & [6]\setminus {\rm supp}(x)
& x_1 & x_2 & x_3 & x_4 & x_5 & x_6 & \\
\hline
1234 & 56 & \alpha & \beta & \pm1 & \pm1 & 0 & 0 & (\alpha,\beta)\in\{(-1,-1),(-1,1),(1,-1)\}\\
1256 & 34 & \alpha & \beta & 0 & 0 & \pm1 & \pm1 & (\alpha,\beta)\in\{(-1,-1),(-1,1),(1,-1)\}\\
3456 & 12 & 0 & 0 & \pm1 & \pm1 & \pm1 & \pm1 &\\
1235 & 46 & 1 & 1 & \pm1 & 0 & \pm1 & 0 &\\
1246 & 35 & 1 & 1 & 0 & \pm1 & 0 & \pm1 &
\end{array}
\]
For this set we have, $G_I\simeq C_4\sqcup K_2$ and  
${\rm profile}(I)\simeq (16,12,12,4,4)$. 
\item Let us apply the transformation
$$H_2(x) = \begin{cases}
        x_{3\leftrightarrow 6}, & \text{if } x_1 =1\text{ and }x\in T([6]\setminus \{5,6\}) \cup T([6]\setminus \{3,4\}) \\
        x, & \text{otherwise }  \\
    \end{cases}.$$
on each vector of the classical independent set. Then, we obtain the second non-classical  maximum independent set:
\[
\begin{array}{c|c|cccccc}
{\rm supp}(x) & [6]\setminus {\rm supp}(x)
& x_1 & x_2 & x_3 & x_4 & x_5 & x_6\\
\hline
1234 & 56 & -1 & \pm1 & \pm1 & \pm1 & 0 & 0\\
1256 & 34 & -1 & \pm1 & 0 & 0 & \pm1 & \pm1\\
3456 & 12 & 0 & 0 & \pm1 & \pm1 & \pm1 & \pm1\\
1235 & 46 & 1 & \pm1 & \pm1 & 0 & \pm1 & 0\\
1246 & 35 & 1 & \pm1 & 0 & \pm1 & 0 & \pm1
\end{array}
\]
For this set we have, $G_I\simeq C_4\sqcup K_2$ and  ${\rm profile}(I)\simeq (16,8,8,8,8)$.
\item Let us apply the transformation
$$H_3(x) = \begin{cases}
        x_{3\leftrightarrow 6}, & \text{if } (x_1,x_2) =(1,-1)\text{ and }x\in T([6]\setminus \{5,6\}) \cup T([6]\setminus \{3,4\}) \\
        x_{3\leftrightarrow 5}, & \text{if } (x_1,x_2) =(1,1)\text{ and }x\in T([6]\setminus \{5,6\}) \cup T([6]\setminus \{3,4\}) \\
        x, & \text{otherwise }  \\
    \end{cases}.$$
on each vector of the classical independent set. Then, we obtain the third non-classical maximum independent set:
\[
\begin{array}{c|c|cccccc}
{\rm supp}(x) & [6]\setminus {\rm supp}(x)
& x_1 & x_2 & x_3 & x_4 & x_5 & x_6\\
\hline
1234 & 56 & -1 & \pm1 & \pm1 & \pm1 & 0 & 0\\
1256 & 34 & -1 & \pm1 & 0 & 0 & \pm1 & \pm1\\
3456 & 12 & 0 & 0 & \pm1 & \pm1 & \pm1 & \pm1\\
1235 & 46 & 1 & -1 & \pm1 & 0 & \pm1 & 0\\
1246 & 35 & 1 & -1 & 0 & \pm1 & 0 & \pm1\\
1236 & 45 & 1 & 1 & \pm1 & 0 & 0 & \pm1\\
1245 & 36 & 1 & 1 & 0 & \pm1 & \pm1 & 0
\end{array}
\]
For this set we have, $G_I\simeq  K_4\sqcup K_2$ and  ${\rm profile}(I)\simeq (16,8,8,4,4,4,4)$.
\item Let us apply the transformation
$$H_4(x) = \begin{cases}
        x_{3\leftrightarrow 6}, & \text{if } x_1\ne x_2\text{ and }x\in T([6]\setminus \{5,6\}) \cup T([6]\setminus \{3,4\}) \\
        x, & \text{otherwise }  \\
    \end{cases}.$$
on each vector of the classical independent set. Then, we obtain the fourth non-classical  maximum independent set:
\[
\begin{array}{c|c|cccccc}
{\rm supp}(x) & [6]\setminus {\rm supp}(x)
& x_1 & x_2 & x_3 & x_4 & x_5 & x_6\\
\hline
1234 & 56 & \alpha & \alpha & \pm1 & \pm1 & 0 & 0\\
1256 & 34 & \alpha & \alpha & 0 & 0 & \pm1 & \pm1\\
3456 & 12 & 0 & 0 & \pm1 & \pm1 & \pm1 & \pm1\\
1235 & 46 & \alpha & -\alpha & \pm1 & 0 & \pm1 & 0\\
1246 & 35 & \alpha & -\alpha & 0 & \pm1 & 0 & \pm1
\end{array}
\]
For this set we have, $G_I\simeq  C_4\sqcup K_2$ and  ${\rm profile}(I)\simeq (16,8,8,8,8)$.
\item Let us apply the transformation
$$H_5(x) = \begin{cases}
        x_{3\leftrightarrow 6}, & \text{if } x_1\ne x_2\text{ and }x\in T([6]\setminus \{5,6\}) \cup T([6]\setminus \{3,4\}) \\
        x_{3\leftrightarrow 5}, & \text{if } (x_1, x_2)=(1,1)\text{ and }x\in T([6]\setminus \{5,6\}) \cup T([6]\setminus \{3,4\}) \\
        x, & \text{otherwise }  \\
    \end{cases}.$$
on each vector of the classical independent set. Then, we obtain the fifth non-classical  maximum independent set:
\[
\begin{array}{c|c|cccccc}
{\rm supp}(x) & [6]\setminus {\rm supp}(x)
& x_1 & x_2 & x_3 & x_4 & x_5 & x_6\\
\hline
1234 & 56 & -1 & -1 & \pm1 & \pm1 & 0 & 0\\
1256 & 34 & -1 & -1 & 0 & 0 & \pm1 & \pm1\\
3456 & 12 & 0 & 0 & \pm1 & \pm1 & \pm1 & \pm1\\
1235 & 46 & \alpha & -\alpha & \pm1 & 0 & \pm1 & 0\\
1246 & 35 & \alpha & -\alpha & 0 & \pm1 & 0 & \pm1\\
1236 & 45 & 1 & 1 & \pm1 & 0 & 0 & \pm1\\
1245 & 36 & 1 & 1 & 0 & \pm1 & \pm1 & 0
\end{array}
\]
For this set we have, $G_I\simeq  K_4\sqcup K_2$ and  ${\rm profile}(I)\simeq (16,8,8,4,4,4,4)$.
\item To obtain the last type of maximum independent set we have to transform elements of all three $T([6]\setminus \{5,6\})$, $T([6]\setminus \{3,4\})$ and $T([6]\setminus \{1,2\})$, so it is easier to describe it directly.

Partition the set ${\mathbb F}_3^2\setminus \{(0,0)\}$ into two 4-element subsets, one consisting of $\{(0,-1),(-1,0),(0,1),(1,0)\}$ (i.e. all pairs with one zero) and another consisting of $\{(-1,1),(-1,-1),(1,-1),(1,1)\}$ (the rest).  Define any bijective mapping $f: \{(0,-1),(-1,0),(0,1),(1,0)\}\to \{3,4,5,6\}$. Now we define a mapping $m: {\mathbb F}_3^2\setminus \{(0,0)\}\to {\mathbb F}_2^4$ for any $(a,b)\in \{(0,-1),(-1,0),(0,1),(1,0)\}$ by
$$
m(a,b)=\mathds{1}_{\{3,4,5,6\}\setminus \{f(a,b)\}},
$$
and  for any $(a,b),(c,d)\in \{(0,-1),(-1,0),(0,1),(1,0)\}$  such that $(a,b)\oplus (c,d)\ne (0,0)$, by
$$
m((a,b)\oplus (c,d))=m(a,b)\oplus m(c,d).
$$
Then, the maximum independent set is defined by
$$
I = \left\{\{(a,b)\}\times T(m(a,b))\mid (a,b)\in  {\mathbb F}_3^2\setminus \{(0,0)\}\right\}.
$$
If we choose the bijection $f$ as $\{(0,-1)\to 3,(-1,0)\to 4,(0,1)\to 5,(1,0)\to 6\}$, the table of the independent set is given below.
\[
\begin{array}{c|c|cccccc}
{\rm supp}(x) & [6]\setminus {\rm supp}(x)
& x_1 & x_2 & x_3 & x_4 & x_5 & x_6\\
\hline
2456 & 13
& 0 & -1 & 0 & \pm1 & \pm1 & \pm1
\\
1356 & 24
& -1 & 0 & \pm1 & 0 & \pm1 & \pm1
\\
2346 & 15
& 0 & 1 & \pm1 & \pm1 & 0 & \pm1
\\
1345 & 26
& 1 & 0 & \pm1 & \pm1 & \pm1 & 0
\\
1245 & 36
& -1 & 1 & 0 & \pm1 & \pm1 & 0
\\
1234 & 56
& -1 & -1 & \pm1 & \pm1 & 0 & 0
\\
1236 & 45
& 1 & -1 & \pm1 & 0 & 0 & \pm1
\\
1256 & 34
& 1 & 1 & 0 & 0 & \pm1 & \pm1
\end{array}
\]
For this set we have, $G_I\simeq  K_{3,3}\setminus e$ (oriented edges are $\{1,4,6\}\times \{2,3,5\}\setminus \{(1,2)\}$) and  ${\rm profile}(I)\simeq (8,8,8,8,4,4,4,4)$.
\end{enumerate}

\subsection{The cases of $n=7$ and $n=9$}\label{steiner-triple-leave}
\begin{wrapfigure}{r}{0.35\textwidth}
\centering
\includegraphics[width=0.35\textwidth]{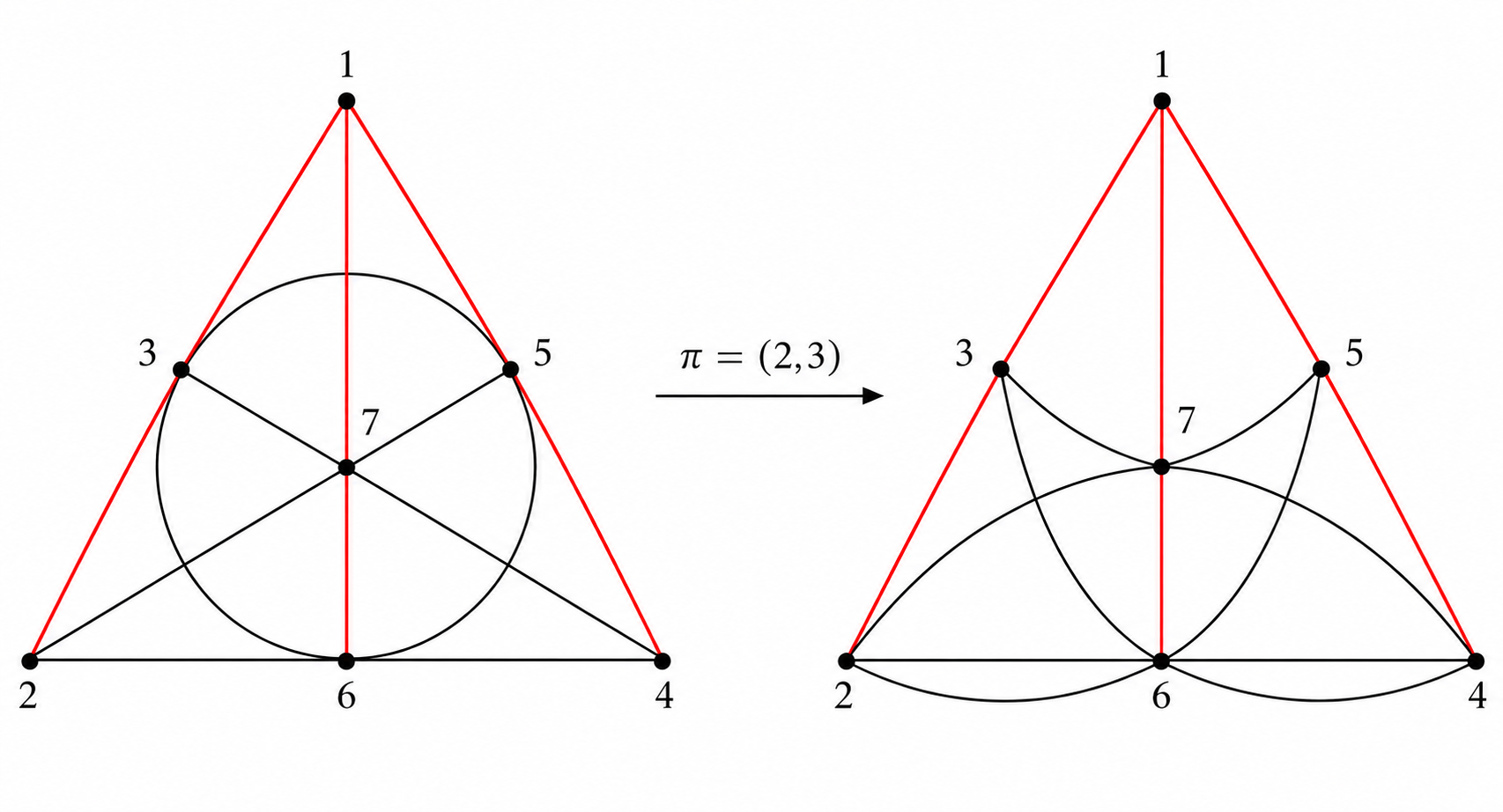}
\caption{\small The transformation of Fano plane under $\pi=(2,3)$.}\label{1-factor}
\end{wrapfigure}
There are only two non-isomorphic independent sets for $n=7$. The first
one is clearly classical. For the second one, we compared the corresponding
kissing arrangement with the arrangement
$\mathcal{C}_{\rm 7\text{-}126b}$, provided as supplementary material
to~\cite{cohn2011rigidity}, and verified that the two arrangements are
isomorphic. The latter arrangement was originally discovered
by~\cite{Conway1995}.

Note that the Conway--Sloane construction is a special case of the
independent set whose existence is guaranteed by
Lemma~\ref{conway}. 
Indeed, we show that there exist two distinct optimal
$(7,4,4)$-codes $C$ and $C'$ such that
\[
\{S\mid \mathds{1}_S\in C,\ 1\notin S\}
=
\{S\mid \mathds{1}_S\in C',\ 1\notin S\}.
\]

Recall that the codewords of an optimal $(7,4,4)$-code are complements of
the triples of a Steiner system $S(2,3,7)$. The latter can be represented
geometrically by the lines of the Fano plane (see
Figure~\ref{1-factor}). Under this correspondence, quadruples not containing
the element $1$ correspond precisely to triples containing $1$, shown in
red in the figure.

Now apply the permutation
$\pi=(2,3)$. 
This produces another Fano plane, distinct from the original one, while
preserving the collection of red lines. Consequently, the corresponding
optimal $(7,4,4)$-code $C'$ is distinct from $C$, yet satisfies
$\{S\mid \mathds{1}_S\in C,\ 1\notin S\}
=
\{S\mid \mathds{1}_S\in C',\ 1\notin S\}$.
By Lemma~\ref{conway}, the pair $C,C'$ defines a non-classical maximum
independent set in $J_\pm(7,4)$. This yields precisely the second
non-isomorphic solution. 

Analogously, for $n=9$ we found two non-isomorphic maximum independent sets: the classical one and the one whose kissing arrangement is isometric to the kissing arrangement found by~\cite{cohn2026variationsfivedimensionalspherepackings}. Again, let us show that the second independent set is the one described in Lemma~\ref{conway}.

Let us identify $[9]$ with $[3]^2$. Then, an optimal $(9,4,4)$-code is defined by the following set of quadruples
\[
C
=
\left\{
([3]\setminus\{a\})\times([3]\setminus\{b\})
\mid
a,b\in[3]
\right\}
\]
\[
\qquad\qquad\cup
\left\{
(\{a\}\times([3]\setminus\{b\}))
\cup
(([3]\setminus\{a\})\times\{b\})
\mid
a,b\in[3]
\right\}.
\]
Set
$u=(1,1)$.
Define
\[
\pi
=
\bigl((2,2)\ (3,3)\bigr)
\bigl((2,3)\ (3,2)\bigr),
\]
fixing all other points of $[3]^2$.
Equivalently, $\pi$ acts on the lower-right $2\times2$ subgrid by the
central reflection with center $(2.5,2.5)$.

\begin{lemma}
The permutation $\pi$ satisfies
\[
\pi(C_{\overline{u}})
=
C_{\overline{u}},
\]
but
\[
\pi(u)=m,
\qquad
\pi(C_{\bar u})=C_{\bar m},
\qquad
\pi(C)\neq C.
\]
\end{lemma}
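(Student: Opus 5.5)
Here I read $C_{\bar u}$ as $\{S\in C\mid u\notin S\}$. Since $\pi$ fixes $u=(1,1)$, the point $m=\pi(u)$ equals $u$, so the claims $\pi(u)=m$ and $\pi(C_{\bar u})=C_{\bar m}$ reduce to $\pi(C_{\bar u})=C_{\bar u}$. The proof is then a direct finite verification on the $18$ blocks of $C$. Once it is done, Corollary~\ref{compatibility} and Lemma~\ref{conway} apply with $C'=\pi(C)$.

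First I would name the blocks: $R_{a,b}=([3]\setminus\{a\})\times([3]\setminus\{b\})$ (``rectangles'') and $X_{a,b}=(\{a\}\times([3]\setminus\{b\}))\cup(([3]\setminus\{a\})\times\{b\})$ (``crosses''). The point $u=(1,1)$ lies in $R_{a,b}$ iff $a\neq1$ and $b\neq1$. It lies in $X_{a,b}$ iff exactly one of $a=1$, $b=1$ holds. Hence
\[
C_{\bar u}=\{R_{11},R_{12},R_{13},R_{21},R_{31},X_{11},X_{22},X_{23},X_{32},X_{33}\}.
\]

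Next I would compute $\pi$ on each of these ten blocks. $R_{11}=\{2,3\}^2$ is invariant. $R_{12}=\{(2,1),(2,3),(3,1),(3,3)\}$ maps to $\{(2,1),(3,2),(3,1),(2,2)\}=R_{13}$. Since $\pi$ is an involution, $R_{13}$ maps back to $R_{12}$, and symmetrically $R_{21}\leftrightarrow R_{31}$. $X_{11}=\{(1,2),(1,3),(2,1),(3,1)\}$ consists of fixed points. For the crosses with $a,b\in\{2,3\}$, the image of $X_{a,b}$ is checked one at a time. For instance, $X_{22}=\{(2,1),(2,3),(1,2),(3,2)\}$ is sent to itself because $(2,3)\leftrightarrow(3,2)$. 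Likewise, $X_{23}=\{(2,1),(2,2),(1,3),(3,3)\}$ is sent to itself because $(2,2)\leftrightarrow(3,3)$. $X_{32}$ and $X_{33}$ are handled the same way. So $C_{\bar u}$ is permuted among itself, giving $\pi(C_{\bar u})=C_{\bar u}=C_{\bar m}$.

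Finally, for $\pi(C)\ne C$, it suffices to exhibit a block of $C$ containing $u$ whose image is not a block. Take $R_{22}=\{(1,1),(1,3),(3,1),(3,3)\}$; its image is $\{(1,1),(1,3),(3,1),(2,2)\}$. This set is not a rectangle, since a rectangle is a product set, and it is not a cross. The crosses containing both $(1,1)$ and $(2,2)$ are only $X_{12}=\{(1,1),(1,3),(2,2),(3,2)\}$ and $X_{21}=\{(1,1),(2,2),(2,3),(3,1)\}$, and both differ from the image. The only real obstacle is bookkeeping: making sure the list $C_{\bar u}$ is complete and that no image coincidence is overlooked. I would double-check this by the count $|C_{\bar u}|=10$, which holds because the degree of $u$ in $C$ is $8$.
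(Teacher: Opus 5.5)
Your proof is correct and follows the paper's route. Like the paper, you list the five rectangles $R_{11},R_{12},R_{13},R_{21},R_{31}$ and the five crosses $X_{11},X_{22},X_{23},X_{32},X_{33}$ that avoid $u$, check that $\pi$ permutes them (swapping $R_{12}\leftrightarrow R_{13}$ and $R_{21}\leftrightarrow R_{31}$, fixing the rest), and exhibit a block whose image is not in $C$. The differences are cosmetic: your witness is $R_{22}$ rather than the paper's $R_{33}$, and you state explicitly that $m=\pi(u)=u$, which the paper leaves implicit.
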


\begin{proof}
First we describe $C_{\overline u}$. The code consists of two types of blocks: $R_{a,b}=([3]\setminus\{a\})\times([3]\setminus\{b\})$,
and
$X_{a,b}=
(\{a\}\times([3]\setminus\{b\}))
\cup
(([3]\setminus\{a\})\times\{b\})$.

A rectangle $R_{a,b}$ avoids $u=(1,1)$ if and only if
$a=1 \quad\text{or}\quad b=1$. 
Thus the rectangle blocks in $C_{\overline u}$ are
\[
R_{1,1},R_{1,2},R_{1,3},R_{2,1},R_{3,1}.
\]
The permutation $\pi$ permutes the rectangle blocks of $C_{\overline u}$ as follows:
\[
R_{1,1}\leftrightarrow R_{1,1},
\]
Define $\pi$ by
\[
R_{1,2}\leftrightarrow R_{1,3},
\]
with $(0,2)$ fixed. Then clearly
\[
R_{2,1}\leftrightarrow R_{3,1}.
\]
Therefore the set of rectangle blocks in $C_{\overline u}$ is preserved.

Similarly, a cross $X_{a,b}$ avoids $u=(1,1)$ if and only if either $(a,b)=(1,1)$,
or $a,b\in\{2,3\}$.
Thus the cross blocks in $C_{\overline u}$ are
\[
X_{1,1},X_{2,2},X_{2,3},X_{3,2},X_{3,3}.
\]
By construction, all those blocks do not change under $\pi$.
Therefore the set of cross blocks in $C_{\overline u}$ is also preserved.
Consequently,
$\pi(C_{\overline u})=C_{\overline u}$.

It remains to show that $\pi$ does not preserve $C$. Consider the block
\[
S=R_{3,3}
=
\{(1,1),(1,2),(2,1),(2,2)\}.
\]
Clearly $S\in C$. But
\[
\pi(S)
=
\{(1,1),(1,2),(2,1),(3,3)\}.
\]
This set is neither a rectangle $R_{a,b}$ nor a cross $X_{a,b}$. Hence $\pi(C)\neq C$ and lemma proved.
\end{proof}
So, $C$ and $\pi(C)$ are exactly two optimal $(9,4,4)$-codes needed for Lemma~\ref{conway}.
\subsection{The cases of $n=8,10,14,16$}\label{allsteiner}
According to Corollary~\ref{classical-steiner}, all independent sets are permutations of the classical one.  For $n=14$ and $n=16$, we did not perform explicit computations. The
corresponding entries in the table were instead obtained theoretically from
the fact that all independent sets arise from egotistic Steiner systems. The
numbers of such systems are known to be $4$ and $1054163$, respectively.

\subsection{The cases of $n=5,11$}\label{case5mod6}
The behaviour for $n\equiv 5 \pmod 6$ appears to be substantially more
complicated than in the other congruence classes. In particular, the number
of non-isomorphic independent sets grows dramatically. For instance, in the
case $n=5$ our computations produced $336$ pairwise non-isomorphic
independent sets. The situation for $n=11$ is considerably more involved.

In the case $n=11$, the computational complexity becomes prohibitively
large already at the level of decoys. More precisely, there are $11343$
decoys to consider, compared to only $7476$ in the case $n=12$. Moreover,
already the first decoy for $n=11$ produces more than $2000$ candidate
solutions, whereas a typical decoy for $n=12$ produces only about $10$
solutions. Consequently, our current computational resources are insufficient
to fully process the case $n=11$.

We believe that this phenomenon is related to the fact that optimal
$(n,4,4)$-codes for $n\equiv 5 \pmod 6$ are considerably less rigid than
in the other congruence classes. Such irregular behaviour is well known in
the theory of constant-weight codes~\cite{bok:MW}.

\section{Isomorphism of maximum independent sets}\label{Isomorphism}
The following theorem is instrumental in our computer-aided counting of non-isometric kissing arrangements coming from independent sets of $J_{\pm}(n,4)$.
\begin{theorem}\label{isomorphism}
Let $n\geq 9$, and let $I,I'\subseteq V(J_{\pm}(n,4))$ be maximum
independent sets 
and
\[
\mathcal K=I\cup E,\qquad \mathcal K'=I'\cup E,
\]
where $E=\{\pm 2e_1,\dots,\pm 2e_n\}$.
Suppose that $U\in O(n)$ is an isometry such that
\[
U(\mathcal K)=\mathcal K'.
\]
Then $U(E)=E$, i.e. every isometry between the two kissing arrangements
is a signed permutation.
\end{theorem}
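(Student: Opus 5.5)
The plan is to find a quantity attached to each point $v\in\mathcal K$ that is invariant under isometries of the arrangement and that separates the points of $E$ from the points of $I$. Since $U$ preserves inner products, it maps each $v\in\mathcal K$ to a point $U(v)\in\mathcal K'$ with the same such quantity. So if the quantity takes one range of values on $E$ and a disjoint range on $I$ (and likewise for $I'$), then $U(E)\subseteq E$, hence $U(E)=E$, and $U$ permutes the orthogonal frame $\{\pm 2e_i\}$, i.e. it is a signed permutation. All points of $\mathcal K$ have integer coordinates, and distinct points are at distance at least $2$. Hence every inner product between distinct points lies in $\{-4,-2,0,2\}$. The natural candidates are the counts
\[
N_t(v)=\bigl|\{w\in\mathcal K\mid w^\top v=t\}\bigr|,\qquad t\in\{2,0,-2\}.
\]
I would first try the kissing (contact) count $N_2$ and, if needed, the orthogonality count $N_0$.

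First I compute these counts on $E$ using the main fact (Theorem~\ref{main-fact}), which gives $|I\cap V_s|=\alpha(J(n,4))$ for every octant $s$. For $v=2e_i$, the point $w=x\in I$ has $w^\top v=2x_i$. So $N_2(2e_i)=|\{x\in I\mid x_i=1\}|$, and $N_0(2e_i)=2(n-1)+|\{x\in I\mid x_i=0\}|$. Double counting over the bipartite graph $\mathcal B$ from the proof of Theorem~\ref{main-fact}, restricted to octants with $s_i=+1$, gives
\[
\sum_{s:\,s_i=1}\bigl|\{x\in I\cap V_s\mid i\in\operatorname{supp}(x)\}\bigr|=2^{n-4}\,\bigl|\{x\in I\mid x_i=1\}\bigr|.
\]
Each summand is the degree of $i$ in the optimal code $\operatorname{SF}(I\cap V_s)$. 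For $n\ge 9$, known degree bounds for optimal $(n,4,4)$-codes (every point lies in at most $\lfloor\frac{n-1}{3}\lfloor\frac{n-2}{2}\rfloor\rfloor$ blocks, since the derived design is a packing of triples) bound $N_2(2e_i)$ by roughly $\frac{n^2}{6}\cdot 8/ \ldots$. In any case this gives an explicit upper bound of order $n^2$ with a small constant.

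Next I bound $N_2(x)$ from below for $x\in I$. The four points $2e_j$ with $x_j=+1$ give contact $2$. The remaining contacts come from $y\in I$ with $x^\top y=2$. By a symmetry (signed permutation) I may assume $x$ lies in the all-positive octant. Then for every octant $s$ containing $x$, the code $\operatorname{SF}(I\cap V_s)$ is optimal, hence a maximal independent set of $J(n,4)$. I will exploit this maximality as in the proof of Lemma~\ref{compatibility-simple}: the $4$-sets meeting $\operatorname{supp}(x)$ in exactly $2$ points are forced to be covered by blocks in a way that yields many $y$ with $|\operatorname{supp}(x)\cap\operatorname{supp}(y)|=2$ and matching signs there, giving $x^\top y=2$. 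Alternatively, if $N_2$ does not separate, I would compare $N_0$, where $x\in I$ sees only $n-4$ orthogonal points of $E$ while $2e_i$ sees $2(n-1)$ of them. The same octant-averaging then controls the $I$-contributions.

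The main obstacle is this separation step. Both counts depend on the unknown, possibly non-classical, structure of $I$, and the inequality must hold uniformly for every maximum independent set. The hypothesis $n\ge 9$ should enter exactly here: in $n=8$ the classical set together with $E$ is $E_8$, whose symmetry group is transitive, so no such invariant can exist. I expect to need the averaged counts over octants from Theorem~\ref{main-fact}, combined with the maximality of each octant code, to show that $N_2(2e_i)$ and $N_2(x)$ (or the pair $(N_2,N_0)$) lie in disjoint ranges once $n\ge 9$. Should a pure counting argument fall short for small $n$ such as $9$ or $10$, I would finish those cases using the explicit classifications from the previous sections, checking the invariant directly.
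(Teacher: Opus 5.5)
There is a genuine gap, and it begins with your opening claim that every inner product between distinct points of $\mathcal K$ lies in $\{-4,-2,0,2\}$. Integrality and the distance condition only give $w^\top v\in\mathbb Z\cap[-4,2]$; evenness does not follow. Two vectors of $I$ whose supports share exactly one coordinate have inner product $\pm1$.

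This parity is precisely the missing idea, and it is the paper's whole proof:
\begin{itemize}
\item Every inner product involving a point of $E$ lies in $\{-4,-2,0,2\}$, hence is even.
\item For $x\in I$, pick any octant $V_s\ni x$. By Theorem~\ref{main-fact}, $\operatorname{SF}(I\cap V_s)$ is an optimal code. By Lemma~\ref{1-intersection}, it has a block meeting $\operatorname{supp}(x)$ in exactly one point. The corresponding $y\in I\cap V_s$ agrees in sign with $x$ there, so $x^\top y=1$.
\end{itemize}
Hence ``$p$ has no $q\in\mathcal K\setminus\{p\}$ with $p^\top q=\pm1$'' is an isometry-invariant property that singles out $E$ in both $\mathcal K$ and $\mathcal K'$, and $U(E)=E$ follows at once. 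The hypothesis $n\ge 9$ enters only through Lemma~\ref{1-intersection}. If every other block met $S$ in $0$ or $2$ points, then $|C|\le 1+A(n-4,4,4)+3(n-4)<A(n,4,4)$ by Lemma~\ref{needed-bound}. For $n=8$, blocks of $S(3,4,8)$ meet in $0$ or $2$ points, so all inner products are even, consistent with $E_8$.

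Your counting route, as set up, cannot close, because the separation runs the other way. You plan an upper bound on $N_2(2e_i)$ and a lower bound on $N_2(x)$. In the classical case at $n=9$, every point has degree $8$ and each block meets $12$ others in two points. So $N_2(2e_i)=8\cdot 8=64$, while $N_2(x)=4+4+4\cdot 12=56$. For a Steiner system $S(3,4,n)$, $N_2(2e_i)=\frac{4(n-1)(n-2)}{3}$ and $N_2(x)=12n-40$. These coincide at $n=8$, and $N_2(2e_i)>N_2(x)$ for every $n>8$. So no upper bound on $N_2(2e_i)$ can fall below a lower bound on $N_2(x)$.

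Even in the right direction, the natural general bounds do not separate. One can prove $N_2(x)\le 12n-40$ for every maximum $I$: for each of the six pairs $P\subset\operatorname{supp}(x)$, the tails outside $\operatorname{supp}(x)$ of the relevant $y$ have pairwise non-positive inner products, so there are at most $2(n-4)$ of them. But at $n=9$ one has $N_2(2e_i)=64<68=12\cdot 9-40$. For $n=11$ and $n\ge 13$ there is no classification to fall back on. Since you leave the separation step open, the proposal does not prove the theorem. The parity invariant above replaces all of this counting.
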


\begin{remark} The statement of the latter theorem is not true for $n=8$. Indeed, if $I$ is a maximum independent set in $J_\pm(8,4)$, then the size of $E\cup I$ is $16+16\cdot 14=240$, which is equal to the size of the root system of $E_8$. Due to the uniqueness of an optimal kissing arrangement in ${\mathbb R}^8$ (up to isometry)~\cite{Bannai_Sloane_1981}, $E\cup I$ is isometrical to the root system of $E_8$. So it should preserve all automorphisms of $E_8$, including reflection with respect to any  ${\mathbf v}\in I$, i.e.
$$
{\mathbf x}\to {\mathbf x}- \frac{2{\mathbf v}^\top {\mathbf x}}{{\mathbf v}^\top {\mathbf v}}{\mathbf v}.
$$
It can be easily checked that the latter tranformation cannot be given as a signed permutation.
\end{remark}
To prove Theorem~\ref{isomorphism} we need a couple of lemmas.
\begin{lemma}\label{needed-bound}
For every $n>8$,
\[
A(n,4,4)>1+A(n-4,4,4)+3(n-4).
\]
\end{lemma}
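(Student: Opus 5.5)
The plan is to squeeze $A(n,4,4)$ between a classical lower bound for length $n$ and a classical upper bound for length $n-4$, and then compare two cubic polynomials. The finitely many small $n$ where the polynomial comparison fails will be handled from Table~\ref{tab:n44codes}.

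For the lower bound I will use the Graham--Sloane partition argument. For $c\in\mathbb Z_n$, let $C_c$ be the set of $4$-subsets $S\subseteq[n]$ whose element sum is $\equiv c \pmod n$. Suppose two distinct members of $C_c$ meet in three points, so they differ by exchanging one element $i$ for another element $j$. Then $i\equiv j \pmod n$, which is impossible for $i\ne j$ in $[n]$. Hence each $C_c$ is an $(n,4,4)$-code. By pigeonhole over the $n$ classes,
\[
A(n,4,4)\ \ge\ \binom n4\Big/ n=\frac{(n-1)(n-2)(n-3)}{24}.
\]
For the upper bound I will use the triple-counting argument already used in the proof of Theorem~\ref{classical-steiner}. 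Every triple of $[m]$ lies in at most one block, so
\[
A(m,4,4)\le \frac{\binom m3}{4}=\frac{m(m-1)(m-2)}{24}.
\]
I will apply this with $m=n-4$.

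It then suffices to show $(m+3)(m+2)(m+1) > m(m-1)(m-2)+24+72m$, where $m=n-4$. Expanding, the difference of the two cubics is
\[
(m+3)(m+2)(m+1)-m(m-1)(m-2)=9m^2+9m+6.
\]
So the required inequality is $9m^2-63m-18>0$, i.e.\ $m^2-7m-2>0$. This holds for all $m\ge 8$: at $m=8$ the left side is $6$, and it is increasing from there. Strictness comes for free, because both bounds are valid and the inequality between them is strict. This settles $n\ge 12$.

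The remaining cases $n=9,10,11$ I will check directly from the known values in Table~\ref{tab:n44codes}, together with $A(5)=1$, $A(6)=3$, $A(7)=7$:
\begin{itemize}
\item $n=9$: $18>1+1+15=17$;
\item $n=10$: $30>1+3+18=22$;
\item $n=11$: $35>1+7+21=29$.
\end{itemize}
The case $n=12$ can also be confirmed directly, since $51>1+14+24=39$.

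The only delicate point is the range where the crude polynomial bounds are not strong enough. The $n=9$ case is very tight, with margin $1$. I therefore expect the small cases to be the main obstacle, and I will rely on the exact classified values there rather than any asymptotic estimate.
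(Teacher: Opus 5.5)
Your proof is correct, but it obtains the lower bound on $A(n,4,4)$ differently from the paper.

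The paper uses the exact Johnson-type floor formula for $A(n,4,4)$, whose validity for every $n$ rests on the full determination of maximum quadruple packings. It estimates the nested floors from below and combines this with the same triple-counting bound $A(n-4,4,4)\le\binom{n-4}{3}/4$ that you use. This gives $\frac{11n^2-149n+336}{24}>0$ for $n\ge 11$, so only $n=9,10$ are checked by hand.

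You replace the deep formula with the elementary Graham--Sloane partition bound $A(n,4,4)\ge\binom n4/n$. This bound loses only a quadratic term against the true value. Since the cubic terms cancel in either approach, what remains ($9m^2+9m+6$ against $72m+24$) still suffices. The cost is that your threshold moves from $n\ge 11$ to $n\ge 12$, so you need one extra tabulated check at $n=11$.

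What your route buys is self-containedness. It depends on no existence theorem for packings, and it avoids the nested-floor bookkeeping. That includes the extra $-1$ when $n\equiv 0\pmod 6$, which the paper's one-line estimate treats rather briskly. The paper's route buys a slightly earlier threshold.

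Your small-case inputs are the same kind of known values the paper itself relies on. At $n=11$ you need even less than you used: $\binom{11}{4}/11=30$ already exceeds $1+7+21=29$, so only $A(7,4,4)=7$ is needed there.
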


\begin{proof}
We use the known exact Johnson value~\cite{JOHNSON1972109,10.1007/s10623-014-0001-2}
\[
A(n,4,4)=
\begin{cases}
\left\lfloor
\dfrac n4
\left\lfloor
\dfrac{n-1}{3}
\left\lfloor
\dfrac{n-2}{2}
\right\rfloor
\right\rfloor
\right\rfloor,
& n\not\equiv 0 \pmod 6,\\[2ex]

\left\lfloor
\dfrac n4
\left\lfloor
\dfrac{n-1}{3}
\left\lfloor
\dfrac{n-2}{2}
\right\rfloor
\right\rfloor
-1
\right\rfloor,
& n\equiv 0 \pmod 6,
\end{cases}
\]
where $\left\lfloor x\right\rfloor$ denotes the largest integer not more than $x$.

For $n=9,10$ we directly check:
\[
A(9,4,4)=18,\qquad A(5,4,4)=1,
\]
so
\[
18>1+1+15=17,
\]
and,
\[
A(10,4,4)=30,\qquad A(6,4,4)=3,
\]
so
\[
30>1+3+18=22.
\]
Now assume $n\ge 11$. Since
\[
\left\lfloor \frac{n-2}{2}\right\rfloor\ge \frac{n-3}{2},
\]
we have
\[
A(n,4,4)
\ge
\frac{n}{4}
\left(
\frac{(n-1)(n-3)}{6}-1
\right)-2.
\]
Also,
\[
A(n-4,4,4)
\le
\frac{(n-4)(n-5)(n-6)}{24}.
\]
Therefore
\[
\begin{aligned}
&A(n,4,4)-A(n-4,4,4)-3(n-4)-1 \\
&\ge
\frac{n(n-1)(n-3)}{24}
-\frac {n}{4}
-2
-\frac{(n-4)(n-5)(n-6)}{24}
-3(n-4)-1.
\end{aligned}
\]
Simplifying gives
\[
A(n,4,4)-A(n-4,4,4)-3(n-4)-1
\ge
\frac{11n^2-149n+336}{24}.
\]
The largest root of the latter quadratic polynomial is $\frac{149 + \sqrt{7417}}{22}\approx 10.7$,
hence
\[
A(n,4,4)>1+A(n-4,4,4)+3(n-4),
\]
for $n\geq 11$.
\end{proof}

\begin{lemma}\label{1-intersection}
Let $n>8$, and let $C\subseteq \{0,1\}^n$ be an optimal binary
$(n,4,4)$-code. Then for every $x\in C$ there exists 
$x'\in C$ such that
\[
|{\rm supp}(x)\cap {\rm supp}(x')|=1.
\]
\end{lemma}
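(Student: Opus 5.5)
We argue by contradiction, using the counting inequality of Lemma~\ref{needed-bound}. Fix $x\in C$ with $S=\operatorname{supp}(x)$, $|S|=4$, and suppose that no codeword's support meets $S$ in exactly one point. Since $C$ is an $(n,4,4)$-code, every other codeword $y\in C\setminus\{x\}$ satisfies $|\operatorname{supp}(y)\cap S|\le 2$. Under our assumption this leaves only two types: codewords with $|\operatorname{supp}(y)\cap S|=0$ and codewords with $|\operatorname{supp}(y)\cap S|=2$.

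We bound each class separately. The codewords disjoint from $S$ form an $(n-4,4,4)$-code on $[n]\setminus S$, so there are at most $A(n-4,4,4)$ of them. For codewords meeting $S$ in exactly two points, write $\operatorname{supp}(y)=P\cup Q$ with $P\in\binom{S}{2}$ and $Q$ a $2$-subset of $[n]\setminus S$. For a fixed pair $P$, two such codewords $P\cup Q$ and $P\cup Q'$ must have $|Q\cap Q'|=0$, since otherwise the supports would share three points. Hence for each $P$ the pairs $Q$ form a matching in $[n]\setminus S$, giving at most $\lfloor (n-4)/2\rfloor$ codewords per $P$. With $6$ choices of $P$, this class has at most $6\lfloor(n-4)/2\rfloor\le 3(n-4)$ codewords.

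Adding the codeword $x$ itself, we get
\[
|C|\le 1+A(n-4,4,4)+3(n-4).
\]
Since $C$ is optimal, $|C|=A(n,4,4)$, and for $n>8$ this contradicts Lemma~\ref{needed-bound}. Therefore some $x'\in C$ has $|\operatorname{supp}(x)\cap\operatorname{supp}(x')|=1$.

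The only delicate point is the per-pair bound, namely that the $Q$'s attached to a fixed pair $P$ are pairwise disjoint; this follows directly from the distance-$4$ condition. Lemma~\ref{needed-bound} was evidently tailored to exactly this count, so I expect no further obstacle beyond checking the case split.
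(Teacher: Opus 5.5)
Your proof is correct and follows the paper's own argument step for step. It uses the same contradiction hypothesis and the same split into blocks disjoint from $S$ (at most $A(n-4,4,4)$) and blocks meeting $S$ in a pair (at most $6\lfloor (n-4)/2\rfloor\le 3(n-4)$, via the matching argument for each fixed $P$), and then concludes with Lemma~\ref{needed-bound} exactly as the paper does.
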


\begin{proof}
Fix $x=\mathds{1}_S\in C$, and suppose, for contradiction, that no block of $C$
intersects $S$ in exactly one point. Since distinct blocks of $C$ have
intersection at most $2$, every $\mathds{1}_T\in C\setminus\{\mathds{1}_S\}$ satisfies
\[
|S\cap T|\in\{0,2\}.
\]
Put $m=n-4$.
First, the blocks disjoint from $S$ form a binary $(m,4,4)$-code on
$\{0,1\}^{[n]\setminus S}$. Hence there are at most
\[
A(m,4,4)=A(n-4,4,4)
\]
such blocks.

Second, consider blocks $T$ with $|S\cap T|=2$. Each such block is of
the form
\[
T=P\cup R,
\]
where $P\in {S\choose 2}$ and $R\in {[n]\setminus S\choose 2}$.
Fix a pair $P\in {S\choose 2}$. If
\[
T_1=P\cup R_1,\qquad T_2=P\cup R_2
\]
are two distinct blocks of $C$, then $R_1$ and $R_2$ must be disjoint.
Indeed, if $R_1\cap R_2\neq\varnothing$, then
\[
|T_1\cap T_2|\geq |P|+1=3,
\]
contradicting the $(n,4,4)$ condition.

Therefore, for a fixed pair $P\subset S$, there are at most
\[
\left\lfloor \frac{n-4}{2}\right\rfloor
\]
blocks intersecting $S$ exactly in the pair $P$. Since $S$ has
$\binom{4}{2}=6$ pairs, the total number of blocks intersecting $S$ in
two points is at most
\[
6\left\lfloor \frac{n-4}{2}\right\rfloor
\leq 3(n-4).
\]
Combining the two estimates, we get
\[
|C|
\leq
1+A(n-4,4,4)+3(n-4).
\]
However, for $n>8$, Lemma~\ref{needed-bound} gives us
\[
A(n,4,4)>
1+A(n-4,4,4)+3(n-4).
\]
This contradicts the optimality of $C$.

Therefore for every $\mathds{1}_S\in C$ there
exists $\mathds{1}_{S'}\in C$ such that
$|S\cap S'|=1$.
\end{proof}

\begin{proof}[Proof of Theorem	~\ref{isomorphism}]
First note that elements of $E$ can be distinguished from elements of $I$  by the following geometric relationship with other vectors:
\[
E=
\left\{
p\in\mathcal K:
p^\top q\notin\{-1,1\}
\text{ for all } q\in\mathcal K\setminus\{p\}
\right\}.
\]
Indeed, for every $p\in E$ and every $q\in \mathcal K\setminus\{p\}$,
it is straightforward to check $p^\top q\notin\{-1,1\}$. 

Conversely, let $x\in I$. Let $Q$ be the hyperoctant containing $x$.
Then, by Theorem~\ref{main-fact}, $I\cap Q$ is, after eliminating signs, an optimal
binary $(n,4,4)$-code. Let $S=\operatorname{supp}(x)$. By Lemma~\ref{1-intersection}, there exists a vector
$y\in I\cap Q$ such that
\[
|\operatorname{supp}(x)\cap \operatorname{supp}(y)|=1.
\]
Since $x$ and $y$ lie in the same hyperoctant, their signs agree on
their common coordinate. Hence,
\[
x^\top y=1.
\]
Thus, every vector $x\in I$ has another vector $y\in\mathcal K$ with
inner product $1$.

The same characterization holds for $E\subseteq\mathcal K'$. Since
$U$ is orthogonal, it preserves all inner products. Therefore it
preserves the above intrinsic property. Hence, $U(E)=E$.
\end{proof}

\bibliographystyle{IEEEtran}
\bibliography{lit}

\end{document}